\definecolor{red}{rgb}{1,0,0}
\definecolor{blue}{rgb}{0,0,1}
\definecolor{green}{rgb}{0,0.5,0}
\definecolor{magenta}{rgb}{1,0,1}
\newsavebox{\ieeealgbox}
\newcommand{\be}{\begin{equation}}
\newcommand{\ee}{\end{equation}}
\newcommand{\bea}{\begin{eqnarray}}
\newcommand{\eea}{\end{eqnarray}}
\newcommand{\bal}{\begin{align}}
\newcommand{\eal}{\end{align}}
\newcommand{\nn}{\nonumber}
\newcommand{\eye}{\mbox{$\mbox{1}\!\mbox{l}\;$}}
\renewcommand{\vec}[1]{\boldsymbol{#1}}
\newcommand{\DD}{\mathcal{D}}
\newcommand{\U}{\mathcal{U}}
\newtheorem{prop}{Proposition}
\newtheorem{lemma}{Lemma}
\newtheorem{corr}{Corollary}
\begin{document}
\title{Dual theory of transmission line outages}

\author{Henrik~Ronellenfitsch, %
Debsankha Manik,%
\thanks{H.~Ronellenfitsch and D.~Manik are at the Max Planck Institute for Dynamics and Self-Organization (MPIDS), 37077 G\"ottingen, Germany.
H.~Ronellenfitsch is additionally at the Department of Physics and
Astronomy, University of Pennsylvania, Philadelphia PA, USA.}
Jonas H\"orsch, %
Tom Brown,%
\thanks{J.~H\"orsch and T.~Brown are at the Frankfurt Institute for Advanced Study, 60438 Frankfurt am Main, Germany.}
Dirk~Witthaut%
\thanks{D.~Witthaut is at the Forschungszentrum J\"ulich, Institute for Energy and Climate Research -
	Systems Analysis and Technology Evaluation (IEK-STE),  52428 J\"ulich, Germany
and the Institute for Theoretical Physics, University of Cologne,
50937 K\"oln, Germany.}}


\markboth{Journal of \LaTeX\ Class Files,~Vol.~13, No.~9, September~2014}%
{Shell \MakeLowercase{\textit{et al.}}: Bare Demo of IEEEtran.cls for Journals}

\maketitle

\begin{abstract}
  A new graph dual formalism is presented for the analysis of line outages
  in electricity networks. The dual formalism is based on a
  consideration of the flows around closed cycles in the network. After
  some exposition of the theory is presented, a
  new formula for the computation of Line Outage Distribution Factors
  (LODFs) is derived, which is not only computationally faster than
  existing methods, but also generalizes easily for
  multiple line outages and arbitrary changes to line series reactance. In addition, the dual formalism provides new physical
  insight for how the effects of line outages propagate through the
  network. For example, in a planar network a single line outage can be shown to induce
  monotonically decreasing flow changes, which are
  mathematically equivalent to an electrostatic dipole field.
\end{abstract}

\begin{IEEEkeywords}
Line Outage Distribution Factor, DC power flow, dual network, graph theory
\end{IEEEkeywords}

\IEEEpeerreviewmaketitle


\section{Introduction}\label{sec:intro}

The robustness of the power system relies on its ability to withstand
disturbances, such as line and generator outages. The grid is usually
operated with `$n-1$ security', which means that it should withstand
the failure of any single component, such as a transmission circuit or
a transformer. The analysis of such contingencies has gained in
importance with the increasing use of generation from variable
renewables, which have led to larger power imbalances in the grid and
more situations in which transmission lines are loaded close to their
thermal limits \cite{Amin05,Heid10,12powergrid,Pesc14,Brown2016,16redundancy}.

A crucial tool for contingency analysis is the use of Line Outage Distribution Factors (LODFs), which measure the linear sensitivity of active power flows in the network to outages of specific lines \cite{Wood14}. LODFs are not only used to calculate power flows after an outage, but are also employed in security-constrained linear optimal power flow (SCLOPF), where power plant dispatch is optimized such that the network is always $n-1$ secure \cite{Wood14}.

LODF matrices can be calculated from Power Transfer Distribution
Factors (PTDFs) \cite{Gule07,Guo09}, which describe how power flows change when
power injection is shifted from one node to another. In
\cite{ronellen16}, a dual method for calculating PTDFs was
presented. The dual method is based on an analysis of the flows around
closed cycles (closed cycles are paths that are non-intersecting and start and
end at the same node~\cite{Dies10}) in the network graph; for a plane graph, a basis of
these closed cycles corresponds to the nodes of the dual graph~\cite{Dies10}.
Seen as a planar polygon, each basis cycle corresponds to one facet of the polygon.
(This notion of dual of a plane graph is called the \emph{weak dual}).
In this paper the dual formalism is applied to derive a
new direct formula for the LODF matrices, which is not only computationally faster than existing
methods, but has several other advantages. It can be easily extended
to take account of multiple line outages and, unlike other methods, it also works for the case
where a line series reactance is modified rather than failing completely.
This latter property is relevant given the increasing use of controllable series compensation devices
for steering network power flows. Moreover, the dual formalism
is not just a calculational tool: it provides new insight
into the physics of how the effects of outages propagate in the
network, which leads to several useful results.

Depending on network topology, the dual method can
lead to a significant improvement in the speed
of calculating LODFs. Thus it can be useful for
applications where LODFs must be calculated repeatedly and
in a time-critical fashion, for
instance in `hot-start DC models' or `incremental DC models'~\cite{Stot09}.
The dual method we describe is particularly suited to these types of
problems because unlike in the primal case, most of the involved matrices
only depend on the network topology and can be stored and reused for
each calculation run.

\section{The primal formulation of linearized network flows}

In this paper the linear `DC' approximation of the power flow in AC
networks is used, whose usefulness is discussed in \cite{Purc05,Hert06}.
In this section the linear power flow formulation for AC networks is
reviewed and a compact matrix notation is introduced.

In the linear approximation, the directed active power flow $F_\ell$ on a line $\ell$ from node $m$ to node $n$ can be expressed in terms of the line series reactance $x_\ell$ and the voltage angles $\theta_m, \theta_n$ at the nodes
\be
   F_{\ell} = \frac{1}{x_{\ell}} (\theta_m - \theta_n) = b_{\ell} (\theta_m - \theta_n) ,
   \label{eq:Ftheta}
\ee
where $b_{\ell} = 1/x_{\ell}$ is the susceptance of the line.
In the following we do not distinguish between transmission lines and transformers, which are treated the same.

The power flows of all lines are written in vector form, $\vec F = (F_1,\ldots,F_L)^t \in \mathbb{R}^L$, and similarly for the nodal voltage angles $\vec \theta = (\theta_1,\ldots,\theta_N)^t \in \mathbb{R}^N$, where the superscript $t$ denotes the transpose of a vector or matrix. Then equation \eqref{eq:Ftheta} can be written compactly in matrix form
\be
   \vec F = \vec B_d \vec K^t \vec \theta ,
   \label{eq:FthetaVector}
\ee
where $\vec B_d = \mbox{diag} (b_1,\ldots, b_L) \in \mathbb{R}^{L\times L}$
is the diagonal branch susceptance matrix.
The incidence matrix $\vec K \in \mathbb{R}^{N \times L}$ encodes how
the nodes of the directed network graph are connected by the lines \cite{Newm10}.
It has components
\be
   K_{n,\ell} = \left\{
   \begin{array}{r l}
      1 & \; \mbox{if line $\ell$ starts at node $n$},  \\
      - 1 & \; \mbox{if line $\ell$ ends at node $n$},  \\
      0     & \; \mbox{otherwise}.
  \end{array} \right.
  \label{eq:def-nodeedge}
\ee
In homology theory $\vec K$ is the boundary operator from the
vector space  of lines $\cong\mathbb{R}^{L}$ to the vector space of
nodes $\cong\mathbb{R}^{N}$.

The incidence matrix also relates the nodal power injections at each
node $\vec P = (P_1,\ldots,P_N) \in \mathbb{R}^N$ to the flows
incident at the node
\begin{equation}
  \vec P = \vec K \vec F . \label{KCL}
\end{equation}
This is Kirchhoff's Current Law expressed in terms of the active
power: the net power flowing out of each node must equal the power
injected at that node.

Combining \eqref{eq:FthetaVector} and \eqref{KCL}, we obtain
an equation for the power injections in terms of the voltage angles,
\begin{equation}
  \vec P = \vec B \vec \theta . \label{pinj}
\end{equation}
Here we have defined the nodal susceptance matrix
$
    \vec B \equiv \vec K \vec B_d \vec K^t  \, \in \mathbb{R}^{N \times N}
$
with the components
\begin{equation}
  B_{m,n} = \left\{
   \begin{array}{lll}
   \displaystyle\sum \nolimits_{\ell \in \Lambda_m} b_{\ell} &  \mbox{if } m = n; \\ [2mm]
     - b_{\ell} & \mbox{if }  m \mbox{ is connected to } n \mbox{ by } \ell,
   \end{array} \right. \label{eq:Bweighted}
\end{equation}
where $\Lambda_m$ is the set of lines which are incident on $m$. The matrix $\vec B$
is a weighted Laplace matrix \cite{Newm10} and equation \eqref{pinj} is a discrete
Poisson equation. Through equations \eqref{eq:FthetaVector} and \eqref{pinj}, there is
now a linear relation between the line flows $\vec F$ and the nodal power injections $\vec P$.

For a connected network, $\vec B$ has a single zero eigenvalue and
therefore cannot be inverted directly. Instead, the Moore-Penrose
pseudo-inverse $\vec B^*$ can be used to solve \eqref{pinj} for $\vec \theta$
and obtain the line flows directly as a linear function of the nodal power injections
\begin{equation}
    \vec F = \vec B_d \vec K^t \vec B^* \vec P. \label{linear}
\end{equation}
This matrix combination is taken as the definition of the nodal Power Transfer Distribution Factor (PTDF) $\mbox{\textbf{PTDF}} \in \mathbb{R}^{L \times N}$
\begin{equation}
  \mbox{\textbf{PTDF}} = \vec B_d \vec K^t \vec B^* .
  \label{eq:def-ptdf}
\end{equation}

Next, the effect of a line outage is considered. Suppose the flows
before the outage are given by $F_k$ and the line which
fails is labeled $\ell$. The line flows after the outage of $\ell$, $F^{(\ell)}_k$ are linearly related to the original flows by the matrix of Line Outage Distribution Factors (LODFs) \cite{Grai94,Wood14}
\be
   F^{(\ell)}_{k} = F_{k} +  \mbox{LODF}_{k\ell }  F_{\ell},
\ee
where on the right hand side there is no implied summation over $\ell$.
It can be shown \cite{Gule07,Guo09} that the LODF matrix elements can be expressed directly in terms of the PTDF matrix elements as
\be
    \mbox{LODF}_{k \ell} =\frac{[\mbox{\textbf{PTDF}}\cdot \vec K]_{k \ell}}{1 - [\mbox{\textbf{PTDF}}\cdot \vec K]_{\ell\ell}} \, .
    \label{eq:lodf-from-ptdf}
\ee
For the special case of $k = \ell$, one defines
$\mbox{LODF}_{kk} = -1$. The matrix $[\mbox{\textbf{PTDF}}\cdot \vec
  K]_{k \ell}$ can be interpreted as the sensitivity of the flow on
$k$ to the injection of one unit of power at the from-node of $\ell$
and the withdrawal of one unit of power at the to-node of $\ell$.

\section{Cycles and the Dual Graph}
\label{sec:graph}

The power grid defines a graph $G=(V,E)$ with vertex set $V$ formed by the nodes or buses and edge set $E$ formed by all transmission lines and transformers. The orientation of the edges is arbitrary but has to be fixed because calculations involve directed quantities such as the real power flow.
In the following we reformulate the theory of transmission line outages in terms of \emph{cycle flows}. A directed cycle is a combination of directed edges of the graph which form a closed loop. All such directed cycles can be decomposed into a set of $L-N+1$ fundamental cycles, with $N$ being the number of nodes, $L$ being the number of edges and assuming that the graph is connected \cite{Dies10}. An example is shown in Fig.~\ref{fig:5bus}, where two fundamental cycles are indicated by blue arrows.

The fundamental cyles are encoded in the cycle-edge
incidence matrix $\vec C \in \mathbb{R}^{L \times (L-N+1)}$
\be
C_{\ell, c} = \left\{
   \begin{array}{r l}
      1 & \; \mbox{if edge $\ell$ is element of cycle $c$},  \\
      - 1 & \; \mbox{if reversed edge $\ell$ is element of cycle $c$},  \\
      0     & \; \mbox{otherwise}.
  \end{array} \right.
  \label{eqn:cycle-edge-matrix}
\ee
It is a result of graph theory, which can also be checked by explicit calculation, that the $L-N+1$ cycles are a basis for the kernel of the incidence matrix $\vec K$ \cite{Dies10},
\be
\vec K \vec C = \vec 0 \, . \label{IC}
\ee

Using the formalism of cycles, the Kirchoff Voltage Law (KVL) can be expressed in a concise way.
KVL states that the sum of all angle differences along any closed cycle must equal zero,
\be
     \label{eq:phasecon}
     \sum_{(ij) \in {\rm cycle} \,  c}  \left( \theta_i - \theta_j\right) = 0 \, .
\ee
Since the cycles form a vector space it is sufficient to check this condition for the $L-N+1$ basis
cycles. In matrix form this reads
\be
     \label{eq:phaseconmatrix}
     \vec C^t \vec K^t \vec \theta  = 0 \, ,
\ee
which is satisfied automatically by virtue of equation \eqref{IC}.

Using equation \eqref{eq:FthetaVector}, the KVL
in terms of the flows reads
\be
\vec C^t \vec X_d \vec F = 0    \label{eq:uniqueness},
\ee
where $\vec X_d$ is the branch reactance matrix, defined by
$\vec X_d =\mbox{diag} (x_1,\ldots, x_L) = \mbox{diag} (1/b_1,\ldots, 1/b_L) \in \mathbb{R}^{L\times L}$.

The results of Sections \ref{sec:intro} through \ref{sec:comp} apply for any graph. In the final Section \ref{sec:top}, a special focus is made on planar graphs, i.e., graphs which can be drawn or `embedded' in the plane $\mathbb{R}^2$ without edge crossings. Once such an embedding is fixed, the graph is called a plane graph. Power grids are not naturally embedded in $\mathbb{R}^2$, but while line crossings are possible, they are sufficiently infrequent in large scale transmission grids (such as the high-voltage European transmission grid).

The embedding (drawing) in the plane yields a very intuitive approach to the cycle flow formulation. The edges separate polygons, which are called the \emph{facets} of the graph. We can now define a cycle basis which consists exactly of these facets. Then all edges are part of at most two basis cycles, which is called MacLane's criterion for panarity~\cite{Dies10}. This construction is formalized in the definition of the weak dual graph $DG$ of $G$. The weak dual graph $DG$ is formed by putting dual nodes in the middle of the facets of $G$ as described above, and then connecting the dual nodes with dual edges across those edges where facets of $G$ meet \cite{Dies10,Newm10}. $DG$ has $L-N+1$ nodes and its incidence matrix is given by $C^t$.

The simple topological properties of plane graphs are essential to derive some of the rigorous results obtained in
section~\ref{sec:topology}. For more complex networks, graph embeddings without line crossings can still be defined -- but not on the plane $\mathbb{R}^2$. More complex geometric objects (surfaces with a genus $g>0$)  are needed (see, e.g., \cite{Mode16} and references therein).

\section{Dual theory of network flows}

In this section the linear power flow is defined in terms of dual
cycle variables following \cite{ronellen16}, rather than the nodal voltage angles.
To do this, we define the linear power flow equations directly in terms of
the network flows. The power conservation equation \eqref{KCL}
\begin{equation}
  \vec K \vec F =   \vec P  , \label{eq:continuity}
\end{equation}
provides $N$
equations, of which one is linearly dependent, for the $L$ components
of $\vec F$. The solution space is thus given by an affine subspace of
dimension $L-N+1$.

In section \ref{sec:graph} we discussed that the kernel of $\vec
K$ is spanned by the cycle flows. Thus, we can write every solution of
equation (\ref{eq:continuity}) as a particular solution of the
inhomogeneous equation plus a linear combination of cycle flows:
\be
\vec F = \vec F^{(\rm part)} + \vec C \vec f, \qquad \vec f \in
\mathbb{R}^{L-N+1}.
\ee
The components $f_c$ of the vector $\vec f$
give the strength of the cycle flows for all basis cycles $c =
1,2,\cdots,L-N+1$. A particular solution $\vec F^{(\rm part)}$ can be found by taking the uniquely-determined flows on a spanning tree of the network graph \cite{ronellen16}.

To obtain the correct physical flows we need a further condition to
fix the $L-N+1$ degrees of freedom $f_c$. This condition is provided
by the KVL in \eqref{eq:uniqueness}, which provides exactly $L-N+1$ linear constraints on $\vec f$
\begin{equation}
   \vec C^t \vec X_d  \vec C \vec f = - \vec C^t \vec X_d \vec F^{(\rm part)}. \label{eq:dualPoisson}
\end{equation}
Together with equation (\ref{eq:continuity}), this condition uniquely
determines the power flows in the grid.

Equation \eqref{eq:dualPoisson} is the dual equation of \eqref{pinj}. If the cycle reactance matrix $\vec A\in \mathbb{R}^{L-N+1 \times L-N+1}$ is defined by
\begin{equation}
    \vec A \equiv \vec C^t \vec X_d \vec C,
    \label{eq:def-A}
\end{equation}
then $\vec A$ also has the form,
\begin{equation}
  A_{cc'} = \left\{
   \begin{array}{lll}
   \sum \limits_{\ell \in \kappa_c} x_{\ell} &  \mbox{if } c = c'; \\ [2mm]
     \sum \limits_{\ell \in \kappa_c \cap \kappa_{c'}} \pm x_{\ell} & \mbox{if }  c \neq c,
   \end{array} \right. \label{eq:Aexplicit}
\end{equation}
where $\kappa_c$ is the set of edges around cycle $c$ and the sign
ambiguity depends on the orientation of the cycles. The construction
of $\vec A$ is very similar to the weighted Laplacian in equation
\eqref{eq:Bweighted}; for plane graphs where the cycles correspond to
the faces of the graph, this analogy can be made exact (see Section
\ref{sec:topology}).  Unlike $\vec{B}$, the matrix $\vec{A}$ is
invertible, due to the fact that the outer boundary cycle of the
network is not included in the cycle basis.  This is analogous to
removing the row and column corresponding to a slack node from $\vec
B$, but it is a natural feature of the theory, and not manually imposed.

\section{Dual computation of line outage distribution factors}\label{sec:comp}

\subsection{Single line outages}

The dual theory of network flows derived in the previous section can be used to derive an alternative formula for the LODFs. For the sake of generality we consider an arbitrary change of the reactance of a transmission line $\ell$,
\be
   x_\ell \rightarrow x_\ell + \xi_\ell.
\ee
The generalization to multiple line outages is presented in the following section.
The change of the network structure is described in terms of the branch reactance matrix
\begin{align}
    \hat{ \vec X}_d &= \vec X_d + \Delta \vec X_d
     =   \vec X_d +  \xi_\ell \vec u_{\ell} \vec u_{\ell}^t,
\end{align}
where $\vec u_{\ell} \in \mathbb{R}^L$ is a unit vector which is 1 at position $\ell$ and zero otherwise.
In this section we use the hat to distinguish the line parameters and flows in the modified grid after the outage from the original grid before the outage.

This perturbation of the network topology will induce a change of the power flows
\be
   \hat{\vec F} = \vec F + \Delta \vec F.
\ee
We consider a change of the topology while the power
injections remain constant.
The flow change $\Delta \vec F$ thus does not have any source such that it
can be decomposed into cycle  flows
\be
    \Delta \vec F = \vec C \Delta \vec f.
    \label{eq:DF-cycledec}
\ee

The uniqueness condition (\ref{eq:uniqueness}) for the perturbed network reads
\be
  \vec C^t (\vec X_d + \Delta \vec X_d) (\vec F + \Delta \vec F) = \vec 0.
\ee
Using condition \eqref{eq:uniqueness} for the original network and the cycle flow
decomposition equation~\eqref{eq:DF-cycledec} for the flow changes yields
\begin{align}
         \label{eq:cyclecon1}
     & \vec C^t \hat{\vec X_d} \vec C \Delta\vec f = - \vec C^t  \, \Delta \vec X_d \vec F \nn \\
   \Rightarrow \quad & \Delta\vec f = - (\vec C^t \hat{\vec X_d} \vec C)^{-1} \vec C^t  \,
       \vec u_\ell \xi_\ell \vec u_\ell^t \vec F
\end{align}
such that the flow changes are given by
\be
    \Delta \vec F = \vec C \Delta\vec f = - \vec C (\vec C^t \hat{\vec X_d} \vec C)^{-1} \vec C^t  \,
       \vec u_\ell \xi_\ell \vec u_\ell^t \vec F.
     \label{eq:DF-from-G}
\ee
This expression suggests that we need to calculate the inverse separately for every possible contingency case, which would require a huge computational effort.
However, we can reduce it to the inverse of
$\vec A = \vec C^t \vec X_d \vec C$ describing the unperturbed grid
using the Woodbury matrix identity \cite{Wood50},
\begin{align*}
    & (\vec C^t \hat{\vec X_d} \vec C)^{-1} =
    \big( \vec A + \vec C^t \vec u_\ell \xi_\ell \vec u_\ell^t \vec C \big)^{-1} \\
    &= \vec A^{-1} - \vec A^{-1} \vec C^t \vec u_\ell \left( \xi_\ell^{-1} +
        \vec u_\ell^t \vec C \vec A^{-1} \vec C^t \vec u_\ell \right)^{-1} \vec u_\ell^t \vec C \vec A^{-1}.
\end{align*}
Thus we obtain
\begin{align*}
    &(\vec C^t \hat{\vec X_d} \vec C)^{-1} \vec C^t \vec u_\ell  = \vec A^{-1}\vec C^t \vec u_\ell
       \left( 1 + \xi_\ell \vec u_\ell^t \vec C \vec A^{-1} \vec C^t \vec u_\ell \right)^{-1}.
\end{align*}
We then obtain the induced cycle flows and flow change by inserting this expression into equation (\ref{eq:DF-from-G}). We summarize our results in the following proposition.

\begin{prop}
\label{prop:DeltaF}
If the reactance of a single transmission line $\ell$ is changed by an amount $\xi_\ell$, the real power flows change as
\be
     \Delta \vec F =
         \frac{- \xi_\ell F_\ell}{1 + \xi_\ell \vec u_\ell^t  \vec M \vec u_\ell }
         \,   \vec M  \vec u_\ell
\ee
with the matrix $\vec M = \vec C \vec A^{-1} \vec C^t$.
If the line $\ell$ fails, we have $\xi_\ell \rightarrow \infty$. The line outage distribution factor for a transmission line $k$ is thus given by
\be
\mbox{LODF}_{k,\ell} = \frac{\Delta F_k}{F_\ell} = - \frac{\vec u_k^t \vec M  \vec u_\ell}{
         \vec u_\ell^t  \vec M \vec u_\ell } \, .
   \label{eq:res-lodf-dual}
\ee
\end{prop}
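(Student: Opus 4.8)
The plan is to pick up exactly where the derivation in the text leaves off: equation \eqref{eq:DF-from-G} already expresses $\Delta \vec F$ in terms of the inverse $(\vec C^t \hat{\vec X_d} \vec C)^{-1}$, so the task is to rewrite that inverse through the unperturbed cycle reactance matrix $\vec A$ and then take the appropriate limit. First I would record the rank-one structure: since $\hat{\vec X}_d = \vec X_d + \xi_\ell \vec u_\ell \vec u_\ell^t$, conjugating by $\vec C$ gives $\vec C^t \hat{\vec X_d} \vec C = \vec A + \xi_\ell\, \vec w_\ell \vec w_\ell^t$ with the single update vector $\vec w_\ell := \vec C^t \vec u_\ell \in \mathbb{R}^{L-N+1}$, where $\vec A$ is invertible by the remark at the end of Section~\ref{sec:graph}.

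Next I would apply the Woodbury (here Sherman--Morrison) identity with $U = \vec w_\ell$, scalar middle factor $\xi_\ell$, and $V = \vec w_\ell^t$, which is admissible provided $\xi_\ell \neq 0$ and $1 + \xi_\ell\, \vec w_\ell^t \vec A^{-1} \vec w_\ell \neq 0$. The key point is that only the combination $(\vec C^t \hat{\vec X_d}\vec C)^{-1} \vec C^t \vec u_\ell = (\vec A + \xi_\ell \vec w_\ell \vec w_\ell^t)^{-1}\vec w_\ell$ enters \eqref{eq:DF-from-G}, and with $s := \vec w_\ell^t \vec A^{-1} \vec w_\ell = \vec u_\ell^t \vec M \vec u_\ell$ the telescoping $1 - s/(\xi_\ell^{-1} + s) = 1/(1 + \xi_\ell s)$ collapses this expression to $\vec A^{-1}\vec w_\ell / (1 + \xi_\ell s)$. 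Substituting back into \eqref{eq:DF-from-G}, using $\vec C \vec A^{-1}\vec C^t = \vec M$ and $\vec u_\ell^t \vec F = F_\ell$, produces the stated formula $\Delta \vec F = \frac{-\xi_\ell F_\ell}{1 + \xi_\ell \vec u_\ell^t \vec M \vec u_\ell}\, \vec M \vec u_\ell$.

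For the LODF formula I would then let $\xi_\ell \to \infty$ in the scalar prefactor $-\xi_\ell F_\ell / (1 + \xi_\ell\, \vec u_\ell^t \vec M \vec u_\ell)$, which converges to $-F_\ell / (\vec u_\ell^t \vec M \vec u_\ell)$, project onto line $k$ via $\Delta F_k = \vec u_k^t \Delta \vec F$, and divide by $F_\ell$ to obtain $\mbox{LODF}_{k,\ell} = -\, \vec u_k^t \vec M \vec u_\ell / \vec u_\ell^t \vec M \vec u_\ell$. This limit is precisely the statement that a failed line has infinite series reactance, and it is well defined exactly when $\vec u_\ell^t \vec M \vec u_\ell \neq 0$.

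I expect the only real subtlety --- rather than a genuine obstacle --- to be checking the admissibility of these manipulations: that the scalar $1 + \xi_\ell\, \vec u_\ell^t \vec M \vec u_\ell$ does not vanish (so that the perturbed grid still admits a consistent flow), and that $\vec u_\ell^t \vec M \vec u_\ell = \vec w_\ell^t \vec A^{-1}\vec w_\ell$ is nonzero, which fails exactly when $\ell$ is a bridge, i.e. $\vec w_\ell = \vec C^t \vec u_\ell = 0$ --- the physically expected case in which the line carries no reroutable flow and no LODF exists. Everything else is routine bookkeeping with the rank-one Woodbury formula.
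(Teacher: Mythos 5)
Your proposal is correct and follows essentially the same route as the paper: both start from equation \eqref{eq:DF-from-G}, apply the Woodbury (Sherman--Morrison) identity to the rank-one update $\vec A + \xi_\ell \vec C^t\vec u_\ell \vec u_\ell^t \vec C$, collapse the relevant combination to $\vec A^{-1}\vec C^t\vec u_\ell/(1+\xi_\ell \vec u_\ell^t\vec M\vec u_\ell)$, and take $\xi_\ell\to\infty$ for the LODF. Your added remarks on admissibility (non-vanishing of $1+\xi_\ell\vec u_\ell^t\vec M\vec u_\ell$ and the bridge case $\vec C^t\vec u_\ell=\vec 0$) are a sensible supplement the paper leaves implicit.
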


Note that the formula for an arbitrary change in series reactance of a line is useful for the assessment of the impact of
flexible AC transmission (FACTS) devices, in particular series compensation devices \cite{Zhang2006} or adjustable inductors that clamp onto overhead lines \cite{SmartWires}.

\begin{figure}[tb]
\centering
\includegraphics[trim = 3.5cm 2.5cm 6.5cm 1.5cm,clip,width=8cm]{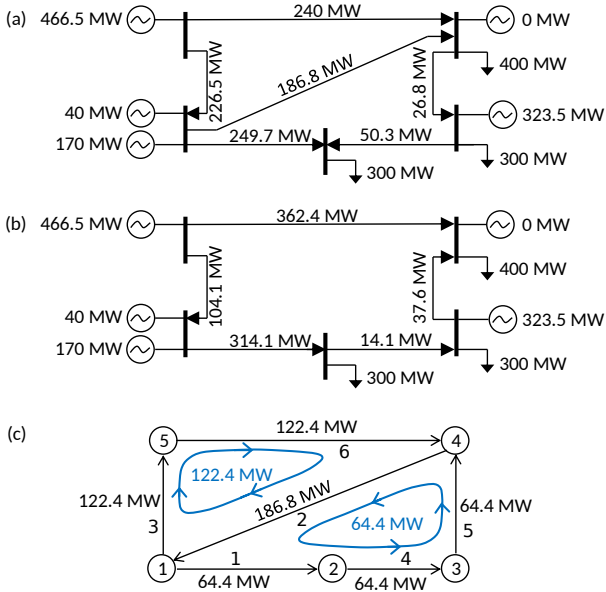}
\caption{
\label{fig:5bus}
(a) DC power flow in a 5-bus test grid from \cite{MATPOWER}.
(b) DC power flow in the same network after the outage of one transmission line.
(c) The change of power flows can be decomposed into two cycle flows.
}
\end{figure}

Finally, an example of failure induced cycle flows
and the corresponding flow
changes is shown in Figure~\ref{fig:5bus}.
In the example, the node-edge incidence matrix is given by
\be
    \vec K = \begin{pmatrix}
       +1 & -1 & +1 & 0 & 0 & 0 \\
       -1 & 0 & 0 & +1 & 0 & 0\\
       0 & 0 & 0 & -1 & +1 & 0 \\
       0 & +1 & 0 & 0 & -1 & -1 \\
       0 & 0 & -1 & 0 & 0 & +1 \\
       \end{pmatrix}.
\ee
The grid contains $2$ independent cycles, which are chosen as\\
\phantom{000} cycle 1: line 2, line 6, line 3. \\
\phantom{000} cycle 2: line 1, line 4, line 5, reverse line 2\\
The cycle-edge incidence matrix thus reads
\be
  \vec C^t = \begin{pmatrix}
        0 & +1 & +1 & 0 & 0 & +1\\
       +1 & -1 & 0 & +1 & +1 & 0 \\
       \end{pmatrix}.
\ee
Thus, the flow changes can be written according to
equation~\eqref{eq:DF-cycledec} with
\be
\Delta \vec f = (122.4~\mathrm{MW}, 64.4~\mathrm{MW})^t.
\ee
(cf.~also \cite{Cole16,Mani16} for a discussion of cycle flows in power grids).

The dual approach to the LODFs can be computationally advantageous for sparse networks as discussed in section~\ref{sec:compute}. Furthermore, we will use it to prove some rigorous results on flow redistribution after transmission line failures in section~\ref{sec:topology}.

\subsection{Multiple line outages}

The dual approach can be generalized to the case of multiple damaged or perturbed transmission lines in a straightforward way. Consider the simultaneous perturbation of the $M$ transmission lines
$\ell_1,\ell_2,\ldots,\ell_M$ according to
\be
   x_{\ell_1} \rightarrow x_{\ell_1}  + \xi_{\ell_1} ,
   x_{\ell_2} \rightarrow x_{\ell_2}  + \xi_{\ell_2} , \ldots,
   x_{\ell_M} \rightarrow x_{\ell_M}  + \xi_{\ell_M}. \nn
\ee
The change of the branch reactance matrix is then given by
\be
   \Delta \vec X_d = \vec \U \, \vec \Xi \, \vec \U^t,
   \label{eq:DeltaX}
\ee
where we have defined the matrices
\begin{align*}
   \vec \Xi &= {\rm diag}(\xi_{\ell_1}, \xi_{\ell_2}, \ldots, \xi_{\ell_M}) \; \in \mathbb{R}^{M \times M}, \\
    \vec \U &= (\vec u_{\ell_1}, \vec u_{\ell_2}, \ldots, \vec u_{\ell_M}  ) \; \in \mathbb{R}^{N \times M}.
\end{align*}
The formula (\ref{eq:DF-from-G}) for the flow changes then reads
\begin{align}
    \Delta \vec F &= - \vec C \left( \vec C^t \hat{\vec X}_d \vec C \right)^{-1}
          \vec C^t  \, \vec \U \, \vec \Xi \, \vec \U^t \vec F.
    \label{eq:DF-from-G-mult}
\end{align}
To evaluate this expression we again make use of the Woodbury matrix identity \cite{Wood50}, which yields
\begin{align*}
    & \left( \vec C^t \hat{\vec X}_d \vec C \right)^{-1}  = \nn \\
    & \quad  \vec A^{-1} - \vec A^{-1} \vec C^t \vec \U \left(\vec \Xi^{-1} +
        \vec \U^t \vec M \vec \U \right)^{-1} \vec \U^t \vec C \vec A^{-1}.
\end{align*}
We then obtain the flow change by inserting this expression into equation (\ref{eq:DF-from-G-mult}) with the result
\begin{align}
    \Delta \vec F &= - \vec C \vec A^{-1}  \vec C^t  \, \vec \U
        \left(\eye + \vec \Xi \, \vec \U^t \, \vec M \vec \U \right)^{-1}
       \vec \Xi \, \vec \U^t \vec F.
    \label{eq:DF-from-G-mult2}
\end{align}

In case of a multiple line outages of lines $\ell_1,\ldots,\ell_m$ we have to consider the limit
\be
   \xi_{\ell_1}, \ldots,\xi_{\ell_M} \rightarrow \infty.
\ee
In this limit equation (\ref{eq:DF-from-G-mult2}) reduces to
\begin{align}
    \Delta \vec F &= - \vec M  \, \vec \U
        \left(\vec \U^t \, \vec M \vec \U \right)^{-1}
       \vec \U^t \vec F.
    \label{eq:DF-from-G-mult-out}
\end{align}
Specifically, for the case of two failing lines, we obtain
\begin{align}
   \Delta F_k &= \frac{M_{k,\ell_1} M_{\ell_2,\ell_2} - M_{k,\ell_2} M_{\ell_2,\ell_1} }{M_{\ell_1,\ell_1}  M_{\ell_2,\ell_2} - M_{\ell_1,\ell_2} M_{\ell_2,\ell_1} } F_{\ell_1}  \nn \\
     & \quad + \frac{M_{k,\ell_2} M_{\ell_1,\ell_1} - M_{k,\ell_1} M_{\ell_1,\ell_2} }{M_{\ell_1,\ell_1}  M_{\ell_2,\ell_2} - M_{\ell_1,\ell_2} M_{\ell_2,\ell_1} } F_{\ell_2} \, .
\end{align}

\subsection{Computational aspects}
\label{sec:compute}

The dual formula (\ref{eq:res-lodf-dual}) for the LODFs can be computationally advantageous to the conventional approach. To calculate the LODFs via equation (\ref{eq:lodf-from-ptdf}) we have to invert the matrix $\vec B \in \mathbb{R}^{N \times N}$ to obtain the PTDFs. Using the dual approach the most demanding step is the inversion of the matrix $\vec A = \vec C^t \vec X_d \vec C \in \mathbb{R}^{(L-N+1) \times (L-N+1)}$, which can be much smaller than $\vec B$ if the network is sparse.
However, more matrix multiplications need to be carried out, which decreases the potential speed-up. We test the computational performance of the dual method by comparing it to the conventional approach, which is implemented in many popular software packages such as for instance in \textsc{Matpower} \cite{MATPOWER}.

Conventionally, one starts with the calculation of the nodal PTDF matrix defined in Eq.~(\ref{eq:def-ptdf}). In practice, one usually does not compute the full inverse but solves the linear system of equations ${\rm \bf PTDF} \cdot \, \vec B = \vec B_d \vec K$ instead. Furthermore, one fixes the voltage phase angle at a slack node $s$, such that one can omit the $s$th row and column in the matrix $\vec B$ and the $s$th column in matrix $\vec B_f = \vec B_d \vec K^T$ while solving the linear system.
The result is multiplied by the matrix $\vec K$ from the right to obtain the PTDFs between the endpoints of all lines. One then divides each column $\ell$ by the value $1 - \mbox{PTDF}_{\ell \ell}$ to obtain the LODFs via formula (\ref{eq:lodf-from-ptdf}).
An implementation of these steps in \textsc{Matlab} is listed in the supplement~\cite{supp}.

\begin{table}[t!]
\centering
\caption{
\label{tab:cputime}
Comparison of CPU time for the calculation of the PTDFs in sparse numerics.
}
\begin{tabular}{ |c|c|c|c|c|c| }
  \hline
  \multicolumn{2}{|c|}{Test Grid}
         & \multicolumn{3}{|c|}{Grid Size}
         & speedup \\
  name & source &
      nodes & cycles & ratio &
       \\
  & & $N$ & $\scriptstyle L-N+1$ & $\frac{L-N+1}{N}$
       &  $\frac{t_{\rm conv}}{t_{\rm dual}}$   \\
       \hline
 case300 & \cite{MATPOWER} & 300 & 110 & 0.37 &  $ 1.83 $ \\
 case1354pegase & \cite{Flis13} & 1354 & 357 & 0.26 & $ 4.43 $ \\
 GBnetwork & \cite{GBnet} & 2224 & 581 & 0.26 &  $ 4.09 $ \\
 case2383wp & \cite{MATPOWER} & 2383 & 504 & 0.21 & $ 4.20 $ \\
 case2736sp & \cite{MATPOWER} & 2736 & 760 & 0.28 & $ 3.27 $ \\
 case2746wp & \cite{MATPOWER} & 2746 & 760 & 0.28 & $ 3.35 $ \\
 case2869pegase & \cite{Flis13} & 2869 & 1100 & 0.38  & $ 2.79 $ \\
 case3012wp & \cite{MATPOWER} & 3012 & 555 & 0.18 & $ 3.93 $ \\
 case3120sp & \cite{MATPOWER} & 3120 & 565 & 0.18 &  $ 3.96 $ \\
 case9241pegase & \cite{Flis13} & 9241 & 4967 & 0.54  & $ 1.31 $ \\
      \hline
\end{tabular}
\end{table}

The dual approach yields the direct formula (\ref{eq:res-lodf-dual}) for the LODFs. To efficiently evaluate this formula we first compute the matrix $\vec M = \vec C \vec A^{-1} \vec C^t$. Again we do not compute the full matrix inverse but solve a linear system of equations instead. The full LODF matrix is then obtained by dividing every column $\ell$ by the factor $M_{\ell \ell}$.

We evaluate the runtime for various test grids from \cite{MATPOWER,Flis13,GBnet} using a \textsc{Matlab} program listed in the supplement~\cite{supp}. All input matrices are sparse, such that the computation is faster when using sparse numerical methods (using the command \texttt{sparse} in \textsc{Matlab} and converting back to \texttt{full} at the appropriate time). Then \textsc{Matlab} employs the high-performance supernodal sparse Cholesky decomposition solver \textsc{Cholmod} 1.7.0 to solve the linear system of equations. We observe a significant speed-up of the dual method by a factor between 1.31 and 4.43
depending on how meshed the grid is  (see Table~\ref{tab:cputime}).

\section{Topology of cycle flows}\label{sec:top}
\label{sec:topology}

In this section the propagation of the effects of line outages are analyzed using the theory of discrete calculus and differential operators on the dual network graph. There is a wide body of physics and mathematics literature on discrete field theory (see, e.g., \cite{Grady2010}). We turn back to the cycle flows themselves and derive some rigorous results. These results help to understand the effects of a transmission line outage and cascading failures in power grids, in particular whether the effects are predominatly local or affect also remote areas of the grid (cf.~\cite{Labavic2014,Kett15,Jung2015,Hine16}).

We start with a general discussion of the mathematical structure of the problem and show that line outages affect only parts of the grid which are sufficiently connected. Further results are obtained for planar graphs (graphs that can be embedded in the plane without line crossings, which approximately holds, e.g., for the European high voltage transmission grid). We characterize the direction of the induced cycle flows and show that the effect of failures decreases monotonically with the distance from the outage. Finally we proceed to discuss non-local effects in non-planar networks.

\subsection{General results}

The starting point of our analysis of the topology of cycle flows is a re-formulation of Proposition \ref{prop:DeltaF}.

\begin{lemma}
\label{lemma:cf1}
The outage of a single transmission line $\ell$ induces cycle flows which are determined by the linear system of equations
 \be
   \vec A \Delta\vec f = \vec q
   \label{eq:CyclePoisson}
\ee
with
$\vec q =  F_{\ell} (\vec u_{\ell}^t \vec C \vec A^{-1} \vec C^t \vec u_{\ell})^{-1} \vec C^t \vec u_{\ell}$
and  $\vec A = \vec C^t \vec X_d \vec C$.
\end{lemma}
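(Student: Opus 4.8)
The plan is to derive \eqref{eq:CyclePoisson} from Proposition~\ref{prop:DeltaF} by re-expressing the failure-induced flow change $\Delta\vec F$ through its cycle decomposition. Since the nodal injections are held fixed during the outage, $\vec K\,\Delta\vec F=\vec 0$, so by \eqref{IC} the change $\Delta\vec F$ lies in the column space of $\vec C$ and can be written $\Delta\vec F=\vec C\,\Delta\vec f$. This representation is unique because the fundamental cycles form a \emph{basis} of $\ker\vec K$: the matrix $\vec C$ has full column rank, hence is injective, so $\vec C\vec a=\vec C\vec b$ forces $\vec a=\vec b$. That cancellation law is essentially the only structural input I need beyond Proposition~\ref{prop:DeltaF}.

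Next I would take the line-outage limit $\xi_\ell\to\infty$ in Proposition~\ref{prop:DeltaF}, i.e.\ use \eqref{eq:res-lodf-dual} in the form $\Delta\vec F=-F_\ell\,(\vec u_\ell^t\vec M\vec u_\ell)^{-1}\,\vec M\vec u_\ell$ with $\vec M=\vec C\vec A^{-1}\vec C^t$. Substituting the definition of $\vec M$ into the right-hand side gives $\vec C\,\Delta\vec f=-F_\ell\,(\vec u_\ell^t\vec M\vec u_\ell)^{-1}\,\vec C\vec A^{-1}\vec C^t\vec u_\ell$; cancelling the common left factor $\vec C$ gives $\Delta\vec f=-F_\ell\,(\vec u_\ell^t\vec M\vec u_\ell)^{-1}\,\vec A^{-1}\vec C^t\vec u_\ell$; and left-multiplying by $\vec A$ — invertible, as noted after \eqref{eq:Aexplicit}, precisely because the outer boundary cycle is excluded from the basis — yields $\vec A\,\Delta\vec f=-F_\ell\,(\vec u_\ell^t\vec C\vec A^{-1}\vec C^t\vec u_\ell)^{-1}\,\vec C^t\vec u_\ell$, which is \eqref{eq:CyclePoisson} with the stated $\vec q$ (up to the overall sign fixed by the orientation convention for $\Delta\vec F$).

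A cleaner variant, which I might present instead, bypasses the scalar limit of Proposition~\ref{prop:DeltaF}. The perturbed cycle-flow condition $\vec C^t\hat{\vec X}_d\vec C\,\Delta\vec f=-\xi_\ell F_\ell\,\vec C^t\vec u_\ell$ (cf.\ the derivation of \eqref{eq:cyclecon1}) can be rewritten, using $\vec C^t\hat{\vec X}_d\vec C=\vec A+\xi_\ell\,\vec C^t\vec u_\ell\vec u_\ell^t\vec C$, as $\vec A\,\Delta\vec f=-\xi_\ell\big(F_\ell+\vec u_\ell^t\vec C\,\Delta\vec f\big)\vec C^t\vec u_\ell=-\xi_\ell\hat F_\ell\,\vec C^t\vec u_\ell$, where I used $\vec u_\ell^t\vec C\,\Delta\vec f=\vec u_\ell^t\Delta\vec F=\Delta F_\ell$ and $\hat F_\ell=F_\ell+\Delta F_\ell$. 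Now only the scalar $\xi_\ell\hat F_\ell$ must be sent to its limit; since a short computation from Proposition~\ref{prop:DeltaF} gives $\hat F_\ell=F_\ell/(1+\xi_\ell\,\vec u_\ell^t\vec M\vec u_\ell)$, that limit is $F_\ell\,(\vec u_\ell^t\vec M\vec u_\ell)^{-1}$, reproducing the same $\vec q$.

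The only genuinely delicate point — which I would state as a hypothesis — is well-posedness of the $\xi_\ell\to\infty$ limit: it requires $\vec u_\ell^t\vec M\vec u_\ell=(\vec C^t\vec u_\ell)^t\vec A^{-1}(\vec C^t\vec u_\ell)\neq 0$. This holds exactly when $\vec C^t\vec u_\ell\neq\vec 0$, i.e.\ when the outaged line $\ell$ lies on at least one fundamental cycle, because $\vec A=\vec C^t\vec X_d\vec C$ is symmetric positive definite (the $x_\ell$ are positive and $\vec C$ is injective) and hence so is $\vec A^{-1}$. The excluded case $\vec C^t\vec u_\ell=\vec 0$ is precisely a bridge, whose removal disconnects the grid, so no finite LODF exists there and the statement is vacuous. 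The remaining steps are routine matrix manipulation.
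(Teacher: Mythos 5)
Your proposal is correct and follows essentially the paper's own route: the lemma is introduced there as a re-formulation of Proposition~\ref{prop:DeltaF}, and both of your derivations (taking the $\xi_\ell\to\infty$ limit of the proposition, and rearranging the Woodbury-step relation~\eqref{eq:cyclecon1} into $\vec A\,\Delta\vec f=-\xi_\ell\hat F_\ell\,\vec C^t\vec u_\ell$) are exactly that re-formulation, with the injectivity of $\vec C$ correctly invoked to cancel the left factor. You were also right to flag the sign: with the convention $\Delta\vec F=\vec C\,\Delta\vec f$ of~\eqref{eq:DF-cycledec}, the consistent inhomogeneity is $\vec q=-F_\ell\,(\vec u_\ell^t\vec M\vec u_\ell)^{-1}\vec C^t\vec u_\ell$ (this gives $\Delta F_\ell=-F_\ell$, as it must when line $\ell$ fails), so the mismatch with the stated $\vec q$ is purely an overall orientation/sign convention, and your added hypothesis $\vec C^t\vec u_\ell\neq\vec 0$ (line $\ell$ not a bridge) is a sensible well-posedness condition that the paper leaves implicit.
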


Note that $\Delta\vec f, \vec q \in \mathbb{R}^{(L-N+1)}$ and $\vec A
\in \mathbb{R}^{(L-N+1) \times (L-N+1)}$. It will now be shown that in some cases this
equation can be interpreted as a discrete Poisson equation for $\Delta\vec f$ with
Laplacian operator $\vec A$ and inhomogeneity $\vec q$. This
formulation is convenient to derive some rigorous results on
flow rerouting after a transmission line failure.

We first note from the explicit construction of $\vec A$ in equation
\eqref{eq:Aexplicit} that two cycles in the dual network are only
coupled via their common edges. The coupling is given by the sum of
the reactances of the common edges. Generally, the reactance of a line is
proportional to its length. The coupling of two cycles is then
directly proportional to the total length of their common boundary,
provided that the lines are all of the same type.
Since the inhomogeneity $\vec q$ is proportional to $\vec C^t \vec u_{\ell}$, it is non-zero
only for the cycles which are adjacent to the failing edge $\ell$:
\be
  q_c \neq 0 \quad \mbox{only if $\ell$ is an element of cycle $c$}.
\ee

\begin{figure}[tb]
\includegraphics[width=0.48\columnwidth]{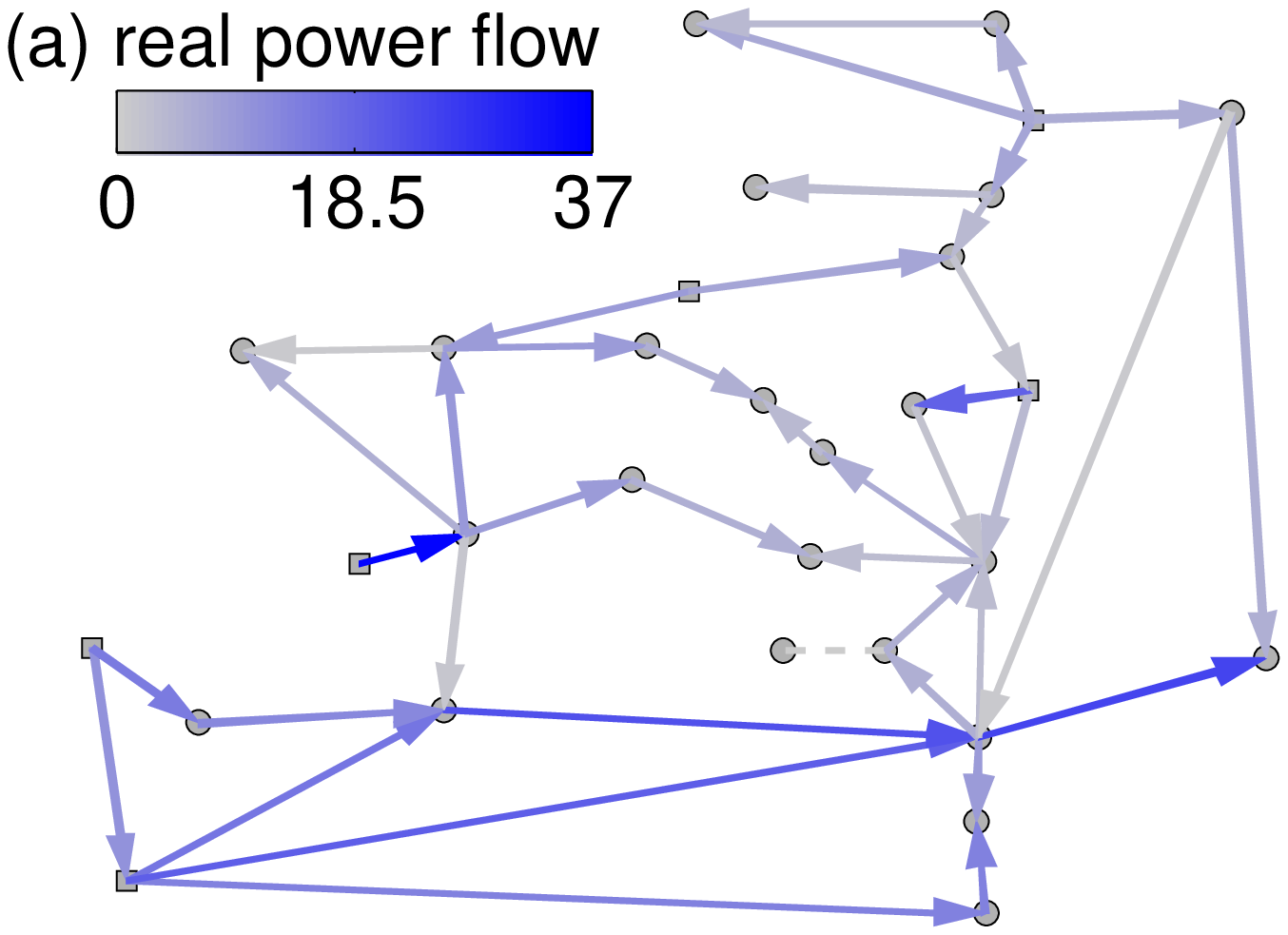}
\includegraphics[width=0.48\columnwidth]{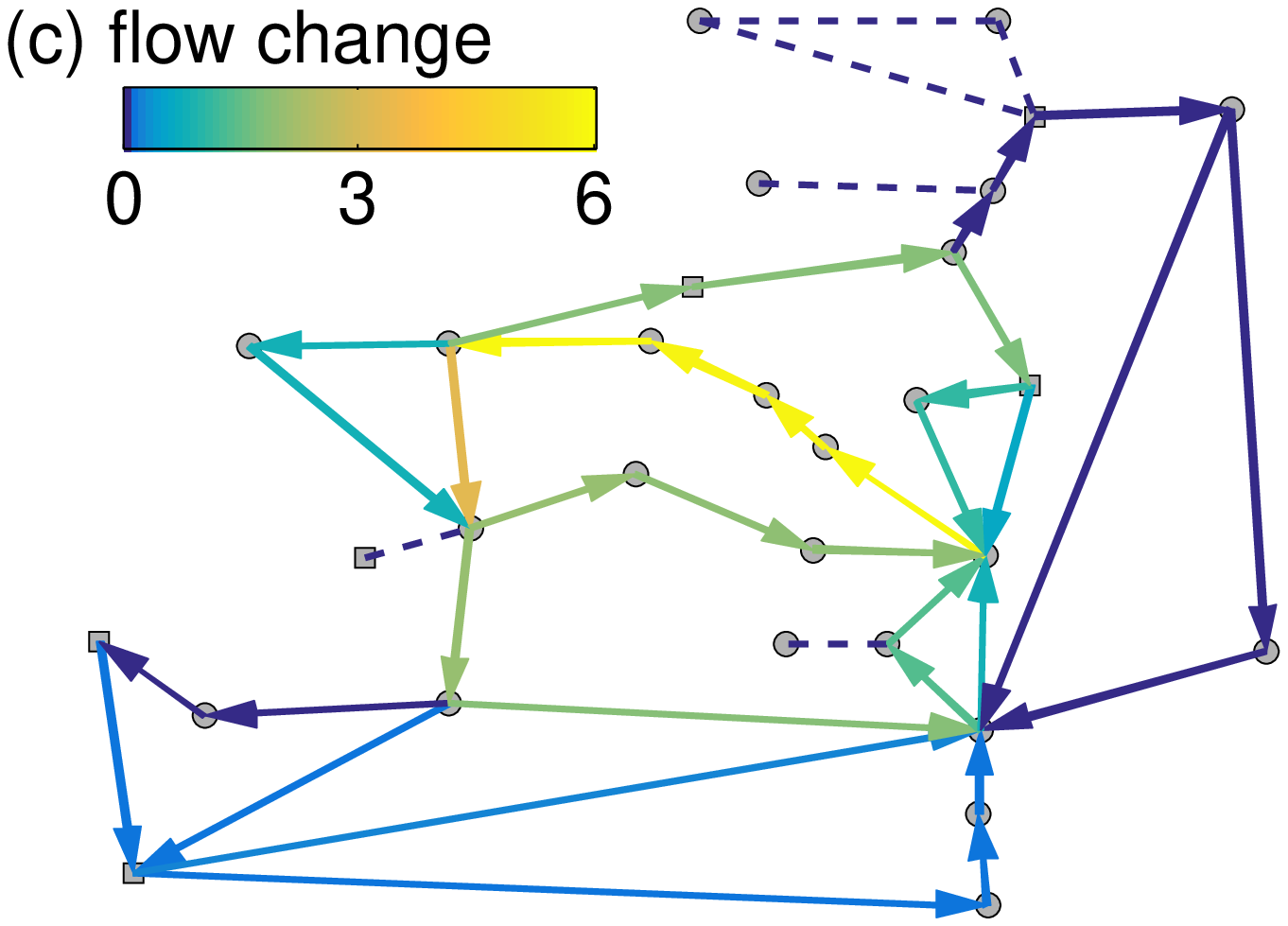} \\ [5mm]
\includegraphics[width=0.48\columnwidth]{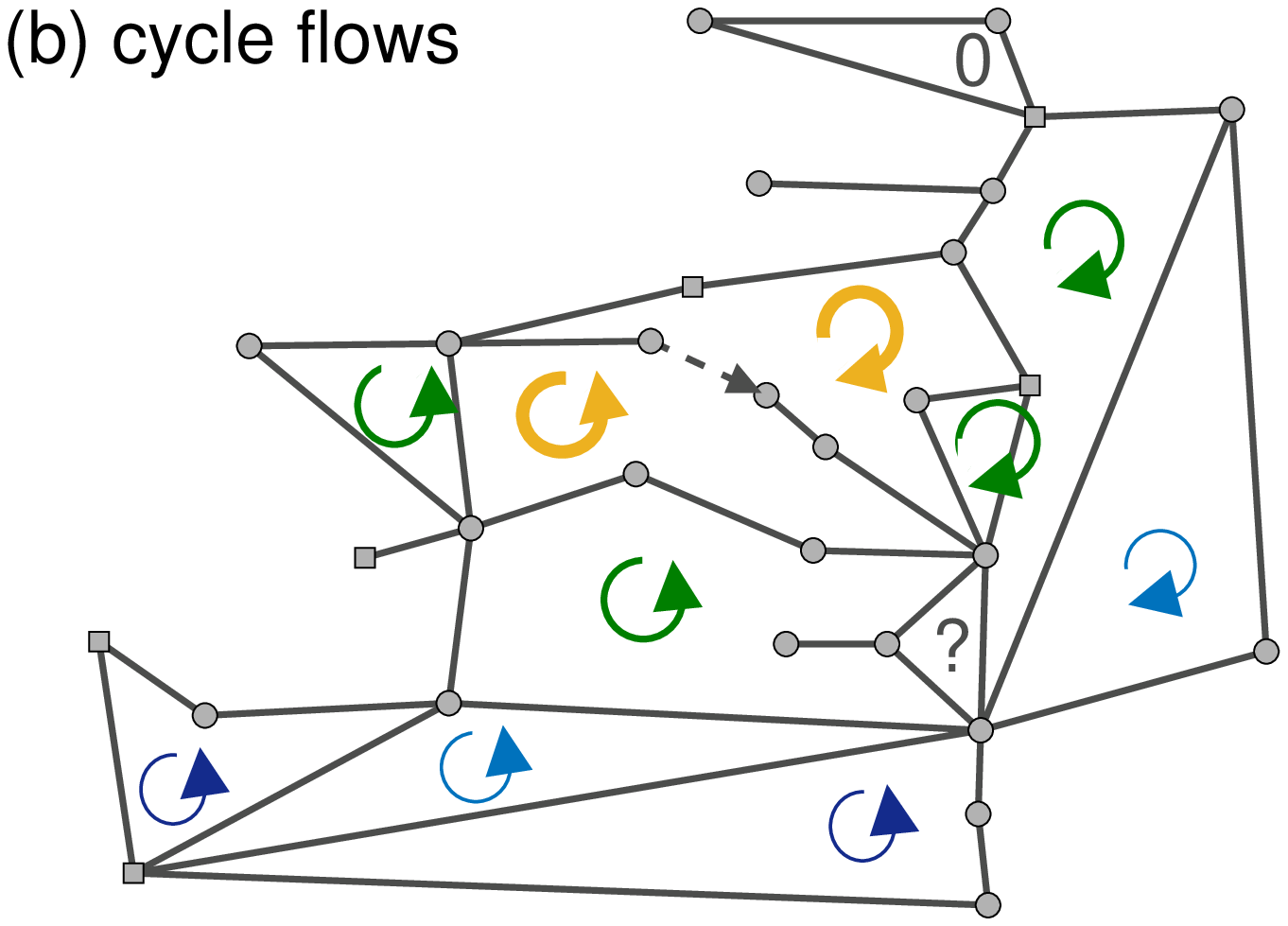}
\includegraphics[width=0.48\columnwidth]{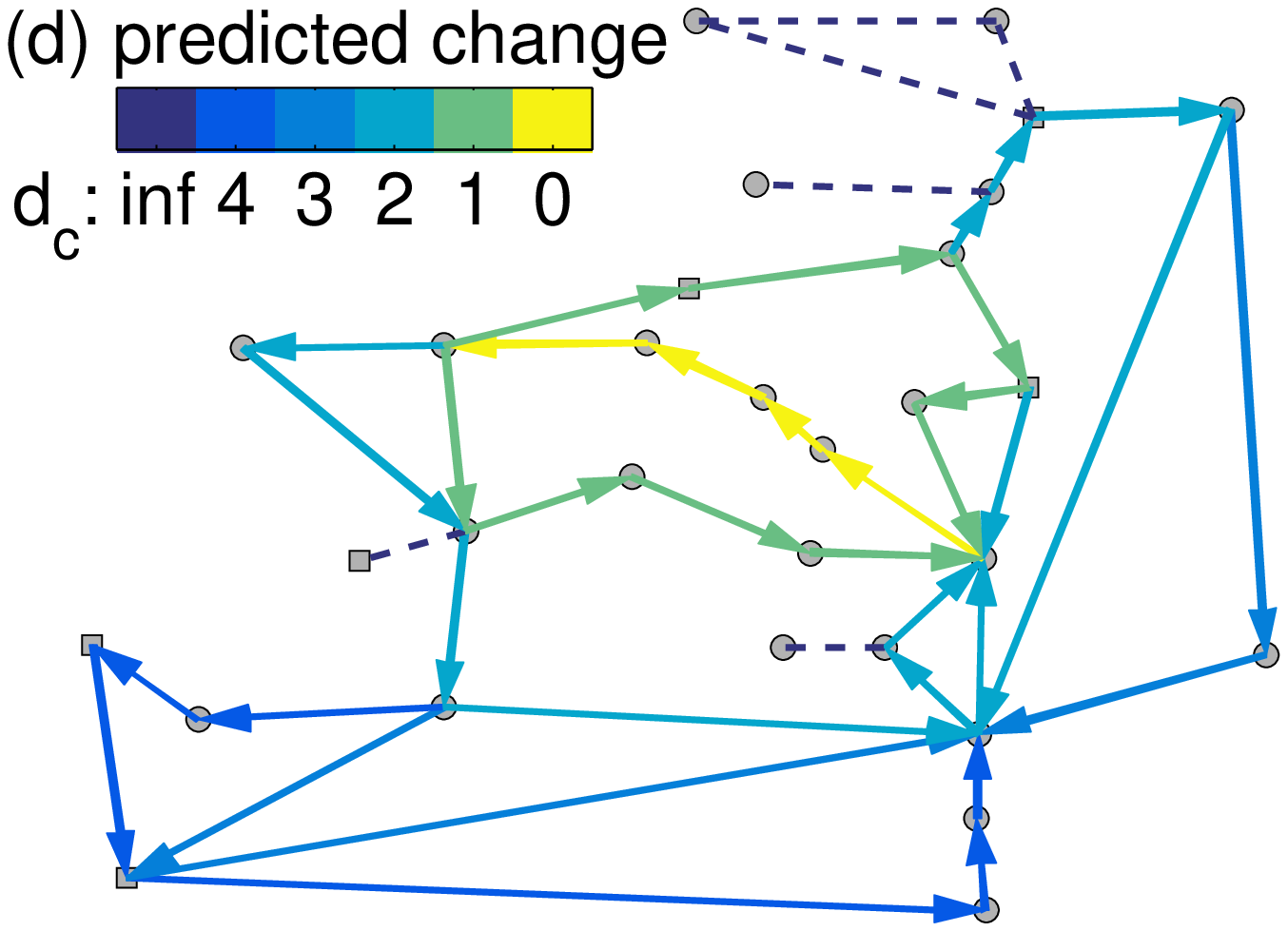}
\caption{
\label{fig:case30}
Flow changes and cycle flows after a transmission line failure in a small test grid.
(a) Real power flows in the initial intact network in MW.
(b) The failure of a transmission line (dashed) must be compensated by cycle flows indicated by
cyclic arrows. The thickness of the lines indicates the approximate strength of the cycle flows.
(c) The resulting flow changes after the failure of the marked transmission line.
(d) The direction of the flow changes can be predicted using Propositions \ref{prop:domains} and \ref{prop:decay}
for all edges and the magnitude decreases with the cycle distance $d_c$.
The power flow in (a,c) has been calculated using the standard software
\textsc{Matpower} for the 30-bus test case \cite{MATPOWER}.
}
\end{figure}

The matrix $\vec A$ typically has a block structure such that a failure in one block cannot affect the remaining blocks. The dual approach to flow rerouting gives a very intuitive picture of this decoupling. To see this, consider the example shown in Figure \ref{fig:case30}. The cycle at the top of the network is connected to the rest of the network via one node. However, it is decoupled in the dual representation because it shares no common edge with any other cycle. Thus, a failure in the rest of the grid will not affect the power flows in this cycle---the mutual LODFs vanish. This result is summarized in the following proposition, and a formal proof is given in the
supplement~\cite{supp}.

\begin{prop}
\label{prop:decoupling}
The line outage distribution factor $\mbox{LODF}_{k,\ell}$ between two edges
$k = (i,j)$ and $\ell = (s,r)$ vanishes if there is only one independent path
between the vertex sets $\{r,s\}$ and $\{i,j\}$.
\end{prop}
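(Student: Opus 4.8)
The plan is to show that the condition ``only one independent path between $\{r,s\}$ and $\{i,j\}$'' forces the matrix $\vec M = \vec C \vec A^{-1} \vec C^t$ to have a vanishing entry $M_{k\ell}$, so that the dual LODF formula \eqref{eq:res-lodf-dual} immediately gives $\mbox{LODF}_{k,\ell}=0$. The key is to translate the graph-theoretic hypothesis into a block decomposition of the cycle space. If there is only one independent path between the two vertex pairs, then removing a single cut vertex (or at most a single cut edge) separates the graph into two pieces $G_1$ and $G_2$ with $\ell$ lying in $G_1$ and $k$ in $G_2$. Crucially, no fundamental cycle can use edges from both $G_1$ and $G_2$: a cycle is a closed walk, and a closed walk confined to a graph with a single connecting path cannot traverse that path (it would have to use the cut edge or pass through the cut vertex twice). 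Hence the cycle basis splits as $\mathcal{C} = \mathcal{C}_1 \cup \mathcal{C}_2$, where cycles in $\mathcal{C}_1$ use only edges of $G_1$ and those in $\mathcal{C}_2$ use only edges of $G_2$.

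Next I would exploit the explicit form of $\vec A$ in \eqref{eq:Aexplicit}: two cycles are coupled in $\vec A$ only if they share a common edge. Since a cycle in $\mathcal{C}_1$ and a cycle in $\mathcal{C}_2$ share no edges (their edge sets lie in the disjoint edge sets of $G_1$ and $G_2$), the matrix $\vec A$ is block diagonal, $\vec A = \vec A_1 \oplus \vec A_2$, after reordering the cycle basis. Therefore $\vec A^{-1} = \vec A_1^{-1} \oplus \vec A_2^{-1}$ is also block diagonal. Now consider $M_{k\ell} = \vec u_k^t \vec C \vec A^{-1} \vec C^t \vec u_\ell$. The vector $\vec C^t \vec u_\ell$ has support only on the cycles containing $\ell$, which all lie in $\mathcal{C}_1$; similarly $\vec C^t \vec u_k$ has support only on cycles containing $k$, all in $\mathcal{C}_2$. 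Because $\vec A^{-1}$ does not couple the $\mathcal{C}_1$ block to the $\mathcal{C}_2$ block, the bilinear form $(\vec C^t \vec u_k)^t \vec A^{-1} (\vec C^t \vec u_\ell)$ vanishes, giving $M_{k\ell}=0$ and hence $\mbox{LODF}_{k,\ell}=0$.

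The main obstacle, and the step requiring the most care, is the first one: making precise the claim that ``only one independent path between $\{r,s\}$ and $\{i,j\}$'' implies the relevant block decomposition of the cycle space. One must handle the cases of a cut vertex versus a cut edge uniformly, and argue rigorously that no element of the cycle space (not merely no fundamental cycle of a particular spanning tree) can straddle the two sides — this is where a careful choice of spanning tree, adapted so that the single connecting path lies in the tree, makes the fundamental-cycle argument clean. A secondary subtlety is that $\ell$ itself might be the connecting edge, in which case $\ell$ lies in no cycle of the intact graph, $\vec C^t \vec u_\ell = \vec 0$, $\vec M \vec u_\ell = \vec 0$, and the formula \eqref{eq:res-lodf-dual} is $0/0$; one then interprets $\mbox{LODF}_{k,\ell}=0$ directly since outage of a bridge reroutes no flow through $k$ (indeed $F_\ell$ itself plays no role in loop flows). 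Everything after the block decomposition is the short linear-algebra computation sketched above.
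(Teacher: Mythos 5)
Your proposal is correct and follows essentially the same route the paper takes (its formal proof is relegated to the supplement, but the main text's discussion of the block structure of $\vec A$ is exactly this argument): the single-separator hypothesis confines every simple cycle to one side, so with a cycle basis of simple cycles the matrix $\vec A$ is block diagonal, $\vec C^t\vec u_\ell$ and $\vec C^t\vec u_k$ live in different blocks, hence $M_{k\ell}=0$ and $\mbox{LODF}_{k,\ell}=0$ by \eqref{eq:res-lodf-dual}. Your flagged subtleties (choice of basis, which is harmless since $\vec M=\vec C\vec A^{-1}\vec C^t$ is invariant under change of cycle basis, and the bridge case $\vec C^t\vec u_\ell=\vec 0$) are handled appropriately.
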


\subsection{Planar networks}

Some important simplifications can be made in the specific case of a plane network.
We can then define the cycle basis in terms of the interior \emph{faces}
of the graph which allows for a intuitive geometric picture of induced cycle flows
as in Figures \ref{fig:case30} and \ref{fig:scheme}. For the remainder of this section
we thus restrict ourselves to such plane graphs and fix the cycle basis by the
interior faces and fix the orientation of all basis cycles to be counter-clockwise.
Thus equation (\ref{eq:CyclePoisson}) is formulated on the \emph{weak dual}
of the original graph.

\begin{figure}[tb]
\centering
\includegraphics[trim=2cm 2cm 6cm 1cm, clip, width=8cm]{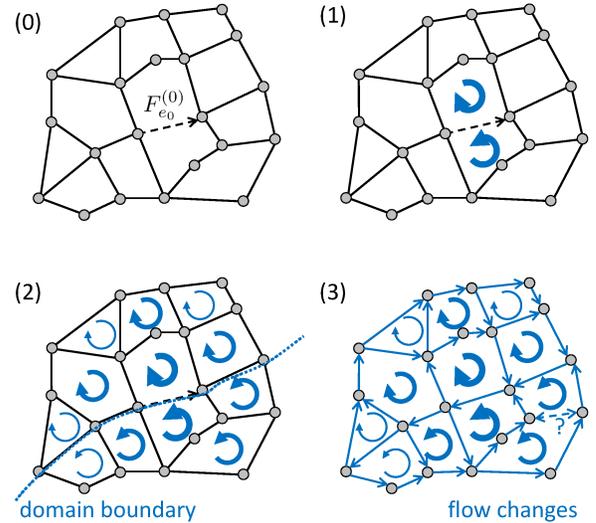}
\caption{
\label{fig:scheme}
Schematic representation of the
flow changes after the damage of a single edge (dashed).
}
\end{figure}

According to Mac Lane's planarity criterion~\cite{Dies10}, every edge in a plane graph belongs
to at most two cycles such that $\vec q$ has at most two-nonzero elements:
One non-zero element $q_{c_1}$ if $\ell$ is at the boundary and two non-zero
elements $q_{c_1} = - q_{c_2}$ if the line $\ell$ is in the interior of the network.
Furthermore, the matrix $\vec A$ is a Laplacian matrix in the interior of the
network~\cite{Newm10}. That is, for all cycles $c$ which are not at the boundary we have
\be
   \sum_{d \neq c} A_{dc} = - A_{cc}.
   \label{eq:Laplace-A}
\ee
Up to boundary effects, equation (\ref{eq:CyclePoisson}) is thus equivalent to a
discretized Poisson equation on a complex graph with a dipole source
(monopole source if the perturbation occurs on the boundary).

For plane networks we now prove some rigorous results on the orientation
of cycle flows (clockwise vs. counter-clockwise) and on their decay
with the distance from the failing edge.
In graph theory, the (geodesic) distance of two vertices is defined as as
the number of edges in a shortest path connecting them \cite{Dies10}.
Similarly, the distance of two edges is defined as the number of
vertices on a shortest path between the edges.

\begin{prop}
\label{prop:domains}
Consider the cycle flows $\Delta\vec f$ induced by the failure of a a single line $\ell$
in a plane linear flow network described by equation (\ref{eq:CyclePoisson}).
The weak dual graph can be decomposed into at most two connected subgraphs
(`domains') $\DD_+$ and $\DD_-$, with $\Delta f_c \ge 0 \,  \forall c \in  \DD_+$ and
$\Delta f_c \le 0 \,  \forall c \in  \DD_-$. The domain boundary, if it exists, includes the perturbed
line $\ell$, i.e. the two cycles adjacent to $\ell$ belong to different domains.
\end{prop}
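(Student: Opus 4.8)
The plan is to read equation~\eqref{eq:CyclePoisson} as a discrete electrostatic dipole problem on the weak dual graph and to extract the sign pattern of $\Delta \vec f$ from the maximum principle for the weighted Laplacian $\vec A$. First I would record three structural facts. (i) Because the face cycles are oriented counter-clockwise, adjacent faces traverse each shared edge in opposite directions, so $\vec A=\vec C^t\vec X_d\vec C$ has non-positive off-diagonal entries; being also symmetric positive definite, $\vec A$ is an $M$-matrix, with zero row sums on interior cells (equation~\eqref{eq:Laplace-A}) and strict diagonal dominance (``grounding'') on boundary cells. (ii) By Lemma~\ref{lemma:cf1} the source $\vec q$ is supported on the at most two faces $c_1,c_2$ incident to $\ell$, with $q_{c_1}=-q_{c_2}$ proportional to $F_\ell$; if $F_\ell=0$ then $\Delta\vec f=\vec 0$ and there is nothing to prove, while reversing the orientation of $\ell$ only interchanges $\DD_+$ and $\DD_-$, so I may assume $q_{c_1}>0>q_{c_2}$ (and if $\ell$ lies on the outer boundary then $c_2$ is absent and $q_{c_1}>0$ is the only nonzero component). (iii) By Proposition~\ref{prop:decoupling} the induced cycle flows vanish on every connected component of the dual graph other than the one containing $c_1$ --- which also contains $c_2$, since the dual edge of $\ell$ joins these two faces --- so the remaining components may be placed in either domain and I restrict to the one carrying the source. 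On that component $\vec A$ is irreducible, hence $\vec A^{-1}$ is entrywise positive and satisfies $(\vec A^{-1})_{cc}\ge(\vec A^{-1})_{cd}$, so writing $\Delta\vec f=q_{c_1}\bigl(\vec A^{-1}\vec u_{c_1}-\vec A^{-1}\vec u_{c_2}\bigr)$ gives $\Delta f_{c_1}>0>\Delta f_{c_2}$ (in the boundary case $\Delta f_c>0$ for every $c$, so the statement holds trivially with the single domain $\DD_+$).

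The core step is a maximum-principle argument showing that $P:=\{c:\Delta f_c>0\}$ is connected, the negative set being handled symmetrically. Suppose, for contradiction, that $P$ has a connected component $U$ not containing $c_1$. Then $U$ contains no source cell (it excludes $c_1$ by assumption and $c_2$ because $\Delta f_{c_2}<0$), so $A_{cc}\Delta f_c=\sum_d\abs{A_{cd}}\Delta f_d$ for every $c\in U$, while any neighbour of a cell of $U$ lying outside $U$ carries a non-positive value. Pick $c^\ast\in U$ maximizing $\Delta f$ over $U$; since $\Delta f_{c^\ast}>0$, the chain $A_{c^\ast c^\ast}\Delta f_{c^\ast}=\sum_d\abs{A_{c^\ast d}}\Delta f_d\le\Delta f_{c^\ast}\sum_d\abs{A_{c^\ast d}}\le\Delta f_{c^\ast}A_{c^\ast c^\ast}$ must collapse to equalities throughout, forcing $c^\ast$ to be an interior cell (zero row sum) and every neighbour $d$ of $c^\ast$ to satisfy $\Delta f_d=\Delta f_{c^\ast}$, hence to lie in $U$. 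Each such neighbour is again a maximizer of $\Delta f$ over $U$, so the argument iterates, and connectedness of the dual component then forces $\Delta\vec f$ to be constant on the whole component --- contradicting $\Delta f_{c_2}<0<\Delta f_{c^\ast}$. Hence $P$ is connected and contains $c_1$; symmetrically $\{c:\Delta f_c<0\}$ is connected and contains $c_2$.

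Finally I would glue in the zero-level cells. A cell $c$ with $\Delta f_c=0$ is a non-source cell (it is neither $c_1$ nor $c_2$), so $\sum_d\abs{A_{cd}}\Delta f_d=A_{cc}\Delta f_c=0$, which means its neighbours are either all zero or else include both a strictly positive and a strictly negative value. Applying this to a frontier cell of a maximal block $Z$ of zero cells --- such a block cannot be the whole component, as that would give $\Delta\vec f\equiv\vec 0$ --- shows that $Z$ is adjacent both to $P$ and to $\{c:\Delta f_c<0\}$. Consequently $\DD_+:=\{c:\Delta f_c\ge 0\}$ is connected (it is $P$ with each zero-block attached), as is $\DD_-:=\{c:\Delta f_c\le 0\}$; together they exhaust the dual graph, and $c_1\in\DD_+\setminus\DD_-$, $c_2\in\DD_-\setminus\DD_+$, so the two faces adjacent to $\ell$ lie in different domains and the dual edge of $\ell$ crosses the domain boundary. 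I expect the main obstacle to be the equality-case bookkeeping in the maximum principle --- in particular ruling out that a second positive (or negative) island hides among the grounded boundary cells or inside the zero-level set --- together with the routine but slightly delicate $M$-matrix facts about $\vec A^{-1}$ used to pin the strict signs at $c_1$ and $c_2$.
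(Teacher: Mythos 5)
Your overall strategy---reading Lemma~\ref{lemma:cf1} as a discrete Poisson problem with a dipole source on the weak dual and extracting the sign pattern of $\Delta\vec f$ from a discrete maximum principle for the weighted Laplacian-like matrix $\vec A$---is the right one and is consistent with how the paper frames equation~\eqref{eq:CyclePoisson} (the paper's own proof is relegated to the supplement). However, there is a genuine gap: the strict pole signs $\Delta f_{c_1}>0>\Delta f_{c_2}$ that anchor your argument do not follow from the (correct, but only weak) M-matrix fact $(\vec A^{-1})_{cc}\ge(\vec A^{-1})_{cd}$, and they are false in general. Concrete counterexample: take a square $ABCD$, a second square $EFGH$ drawn inside it, and a single bridge $AE$. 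The weak dual has two cells, the inner face $c_2$ and the annular face $c_1$, and
\be
\vec A=\begin{pmatrix} X_{\rm out}+X_{\rm in} & -X_{\rm in}\\ -X_{\rm in} & X_{\rm in}\end{pmatrix},
\ee
where $X_{\rm in},X_{\rm out}$ are the summed reactances of the inner and outer squares. Then $(\vec A^{-1})_{c_1c_1}=(\vec A^{-1})_{c_1c_2}=1/X_{\rm out}$, so for the outage of an inner-square edge one gets $\Delta f_{c_1}=0$ exactly (physically, the flow reroutes only around the inner square and the outer-square flows are untouched). This breaks three load-bearing places in your write-up: (i) the anchoring claim that the positive set $P$ contains $c_1$ (here $P$ is empty); (ii) the concluding assertion $c_1\in\DD_+\setminus\DD_-$, $c_2\in\DD_-\setminus\DD_+$; and (iii) the dichotomy you use for zero cells, which you derive from the source-free balance $\sum_d\abs{A_{cd}}\Delta f_d=0$---at a zero cell that happens to be a pole the balance instead reads $\sum_d\abs{A_{cd}}\Delta f_d=-q_{c_1}\neq 0$, so a zero block whose frontier consists of pole cells need not attach to both sign regions, and your gluing of $\DD_\pm$ (which, as defined, also overlap on the zero set rather than decomposing the dual graph) does not go through as stated.

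The good news is that the core maximum-principle engine is sound and the argument is repairable with weak inequalities only: to rule out a positive island $U\not\ni c_1$ you only need $\Delta f_{c_2}\le 0$ (which the weak inequality does give), because the equality-propagation from a maximizer either reaches a pole or a strictly diagonally dominant boundary cell of the dual component, producing a contradiction in either case; hence every nonempty connected component of $\{c:\Delta f_c>0\}$ contains $c_1$, and symmetrically for the negative set. What is then missing is a separate treatment of the degenerate cases and of the zero cells: for instance, if $\Delta f_{c_1}=0$ the same argument shows the positive set is empty, so a single domain suffices (consistent with the ``at most two'' and ``if it exists'' clauses of the proposition), and in general the assignment of zero cells to $\DD_+$ or $\DD_-$ so that both remain connected needs a more careful case analysis than the per-cell dichotomy you state. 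As submitted, the proof asserts strict facts that fail on simple planar examples and leans on them, so it is not yet complete, although the route itself is the right one.
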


A proof is given in the supplement~\cite{supp}. The crucial aspect of this proposition
is that the two domains $\DD_+$ and $\DD_-$ must be \emph{connected}. The
implications of this statement are illustrated in Figure \ref{fig:scheme} in panel (2), showing
the induced cycle flows when the dashed edge is damaged. The induced cycle flows
are oriented clockwise above the domain boundary and counter-clockwise below
the domain boundary. If the perturbed edge lies on the boundary of a finite plane network,
then there is only one domain and all cycle flows are oriented in the same way.

With this result we can obtain a purely geometric view of how the flow of all edges in the
network change after the outage. For this, we need some additional information about
the magnitude of the cycle flows in addition to the orientation. We consider the upper
and lower bound for the cycle flows $\Delta f_c$ at a given distance to the cycle
$c_1$ with $q_{c_1} > 0$ and the cycles  $c_2$ with $q_{c_2} < 0$, respectively:
\begin{align}
   u_d   &= \max_{c, {\rm dist}(c,c_1) = d}  \Delta f_c \nn \\
  \ell_d &=  \min_{c, {\rm dist}(c,c_2) = d}  \Delta f_c.
\end{align}
Here, $\text{dist}$ denotes the graph-theoretic distance between two cycles or faces,
i.e. the length of the shortest path between the two faces in the dual graph.
We then find the following result.


\begin{prop}
\label{prop:decay}
The maximum (minimum) value of the cycle flows decreases (increases) monotonically with the
distance $d$ to the reference cycles $c_1$ and $c_2$, respectively:
\begin{align}
   & u_d  \le u_{d-1}, \qquad 1 \le d \le d_{\rm max}.  \nn \\
   & \ell_d  \ge \ell_{d-1},  \qquad 1 \le d \le d_{\rm max}.
\end{align}
\end{prop}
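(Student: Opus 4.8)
\emph{Proof plan.} The plan is to prove Proposition~\ref{prop:decay} by a discrete maximum principle for the harmonic-type solution $\Delta\vec f$ of \eqref{eq:CyclePoisson}, combined with the sign information of Proposition~\ref{prop:domains}. Recall the structure of $\vec A=\vec C^t\vec X_d\vec C$ in the planar face basis: $A_{cc}>0$, $A_{dc}\le0$ for adjacent faces $c\neq d$, $A_{dc}=0$ otherwise, and by \eqref{eq:Laplace-A} the row sums vanish at interior cycles while being strictly positive at cycles incident to the outer face. Since $\vec q$ is supported on the at most two faces $c_1,c_2$ adjacent to $\ell$ ($q_{c_1}=-q_{c_2}>0$ for an interior line, only $q_{c_1}>0$ for a boundary line), equation \eqref{eq:CyclePoisson} states that at every cycle $c\notin\{c_1,c_2\}$ the value $\Delta f_c$ is a convex combination of the values of its dual neighbours together with a fictitious neighbour of value $0$, of positive weight, attached to the outer face whenever $c$ borders it; at $c_1$ the value strictly exceeds this weighted average, and at $c_2$ it lies strictly below it.

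The first step is to reduce to a single domain. By Proposition~\ref{prop:domains} the weak dual splits into connected domains $\DD_+$ and $\DD_-$ with $\Delta f\ge0$ on $\DD_+$ and $\Delta f\le0$ on $\DD_-$; that $c_1\in\DD_+$ and $c_2\in\DD_-$ follows from $0\le\Delta\vec f^t\vec A\Delta\vec f=\Delta\vec f^t\vec q=q_{c_1}(\Delta f_{c_1}-\Delta f_{c_2})$, using positive definiteness of $\vec A$. It then suffices to establish the bound on $u_d$ within $\DD_+$; the bound on $\ell_d$ follows by applying the same reasoning to $-\Delta\vec f$, which solves $\vec A(-\Delta\vec f)=-\vec q$ and has a positive source at $c_2$ with sign domain $\DD_-$. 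Restricted to $\DD_+$, the function $\Delta f$ is nonnegative, equals the convex-combination average of its neighbours at every cycle of $\DD_+$ except $c_1$ (where it strictly exceeds that average), and on the cycles of $\DD_+$ bordering $\DD_-$ or the outer face it is bounded above by a convex combination in which some weight is carried by values $\le0$.

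The heart of the proof is the maximum principle on $\DD_+$. For $d\ge1$ let $\mathcal R_d$ be the set of cycles of $\DD_+$ at distance $\ge d$ from $c_1$ (so $c_1\notin\mathcal R_d$), and run the usual propagation argument on $\mathcal R_d$: take an argmax $c^\ast$ of $\Delta f$ over $\mathcal R_d$ with value $\mu$. If $\mu>0$, then $c^\ast$ cannot border the outer face nor have a neighbour in $\DD_-$ carrying a negative value, since the convex-combination bound would then force $\mu\le0$; hence either $c^\ast$ already has a neighbour at distance $<d$ from $c_1$, putting $c^\ast$ on the sphere $\{c\in\DD_+:\mathrm{dist}(c,c_1)=d\}$, or $c^\ast$ is an interior harmonic cycle whose $\DD_+$-neighbours all lie in $\mathcal R_d$ and inherit the value $\mu$, and we recurse; since $\mathcal R_d$ is finite and omits $c_1$, the recursion must reach the sphere. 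If $\mu\le0$, then nonnegativity of $\Delta f$ on $\DD_+$ forces $\Delta f\equiv0$ on $\mathcal R_d$ and the conclusion is immediate. Either way $\max_{\mathcal R_d}\Delta f=u_d$, and since $\mathcal R_d\subseteq\mathcal R_{d-1}$ (with $\mathcal R_0=\DD_+$ and $\max_{\DD_+}\Delta f=\Delta f_{c_1}=u_0$, again by the propagation argument applied at $c_1$), we obtain $u_d=\max_{\mathcal R_d}\Delta f\le\max_{\mathcal R_{d-1}}\Delta f=u_{d-1}$, which is the assertion.

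I expect the main obstacle to be the bookkeeping at the two kinds of boundary of $\DD_+$: the cycles incident to the outer face, where $\vec A$ is not a genuine Laplacian so that the clean mean-value property holds only after adjoining the fictitious zero-neighbour, and the interface with $\DD_-$, where one must invoke that neighbouring $\DD_-$-values are $\le0$ and hence cannot support a positive maximum. A further subtlety, visible already in the definition of $u_d$, is that a shortest dual path between two faces of $\DD_+$ need not stay inside $\DD_+$; the argument closes most cleanly when the distance in the spheres is taken within the domain, which coincides with the dual-graph distance exactly when $\ell$ lies on the network boundary (the monopole case, where $\DD_+$ is the entire dual graph).
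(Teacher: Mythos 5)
Your overall strategy --- a discrete maximum principle for the dual Poisson problem \eqref{eq:CyclePoisson} with its dipole source, combined with the sign/connectivity information of Proposition~\ref{prop:domains} and the row-sum structure \eqref{eq:Laplace-A} (mean-value form at interior cycles, fictitious zero-valued outer face at boundary cycles, the energy argument placing $c_1\in\DD_+$ and $c_2\in\DD_-$, and the reduction of the $\ell_d$ statement to the $u_d$ statement via $-\Delta\vec f$) --- is the natural one and matches the spirit of the paper's argument. But two steps do not close as written. First, the termination of your propagation is not justified by finiteness of $\mathcal{R}_d$: equal-value propagation moves along edges of $\DD_+$, and a $\DD_+$-neighbour need not be closer to $c_1$, because a shortest dual path from a cycle of $\DD_+$ to $c_1$ may pass through $\DD_-$; so the recursion does not automatically ``reach the sphere''. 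What saves it is connectivity: if no argmax sat at distance exactly $d$, the equal-value set would be closed under $\DD_+$-adjacency and hence, by Proposition~\ref{prop:domains}, would be all of $\DD_+\ni c_1$, contradicting $c_1\notin\mathcal{R}_d$. Relatedly, your claim that a positive argmax ``cannot have a neighbour in $\DD_-$ carrying a negative value'' is only valid after you know all remaining neighbours carry values $\le\mu$, i.e.\ inside the case in which $c^\ast$ has no neighbour at distance $<d$; as stated, a compensating neighbour of larger value outside $\mathcal{R}_d$ is not excluded.

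The more serious gap is the one you flag in your last sentence but do not resolve: $u_d$ is defined in the proposition as a maximum over the \emph{full} dual-graph sphere of radius $d$ about $c_1$, not over $\DD_+$. Your argument yields $u_d\le u_{d-1}$ only when the sphere of radius $d-1$ contains a cycle of $\DD_+$, so that $u_{d-1}\ge 0$ and the chain $u_d\le\max_{\mathcal{R}_d}\Delta f\le\max_{\mathcal{R}_{d-1}}\Delta f\le u_{d-1}$ closes. If $\DD_+$ has smaller radius about $c_1$ than the dual graph --- perfectly possible for an interior line whose positive domain is a small pocket --- then for the larger $d$ both spheres lie entirely in $\DD_-$ and the claim becomes a comparison of two \emph{negative} sphere maxima centred at $c_1$; there the maximum principle gives only the vacuous bound $\max_{S_{\ge d}}\Delta f\le\max(u_d,0)$, and the minimum principle controls minima over spheres centred at $c_2$, not maxima over spheres centred at $c_1$. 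Retreating to the monopole/boundary-line case, as your closing remark does, leaves the generic dipole case open; a complete proof must either show that every sphere of radius $d\le d_{\rm max}$ about $c_1$ meets $\overline{\DD}_+$ (or fix $d_{\rm max}$ so that this holds), or supply a separate argument for the all-negative spheres, e.g.\ exploiting $\abs{\mathrm{dist}(c,c_1)-\mathrm{dist}(c,c_2)}\le 1$. As it stands this is a genuine hole in the proof, not a bookkeeping issue.
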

A proof is given in the supplement~\cite{supp}.
Strict monotonicity can be proven when some additional technical assumptions are satisfied,
which are expected to hold in most cases.
For a two-dimensional lattices with regular topology and constant weights the cycle flows
are proportional to the inverse distance (see supplement~\cite{supp} for details). However, irregularity of the network
topology and line parameters can lead to a stronger, even exponential, localization \cite{Kett15,Jung2015,Labavic2014}.
Hence, the response of the grid is strong only in the `vicinity' of the damaged
transmission line, but may be non-zero everywhere in the connected component.

However, it has to be noted that the distance is defined
for the dual graph, not the original graph, and that the rigorous results hold only for
plane graphs. The situation is much more involved in non-planar graphs, as a line
can link regions which would be far apart otherwise. Examples for the failure
induced cycles flows and the decay with the distance are shown in Figure
\ref{fig:case30}.


\subsection{General, non-planar networks}

Here, we consider fully general, non-planar networks. Unlike in the previous section, we show that it is impossible to derive a simple monotonic decay of the effect of line failures. Instead, by decomposing the LODFs into
a geometric and a topological part, we show that complex, non-local interactions result. We start with
\begin{prop}
\label{prop:topo}
Every connected graph $G$ can be embedded into a Riemannian surface of genus $g \in \mathbb{N}_0$ without line crossings. The cycle basis can be chosen such that it consists of the boundaries of $L-N+1-2g$ geometric facets of the embedding encoded in the cycle adjacency matrix $\vec{\tilde C} \in \mathbb{R}^{L \times (L-N+1-2g)}$ and $2g$ topological non-contractible cycles encoded in the cycle adjacency matrix $\vec{\hat C} \in \mathbb{R}^{L \times 2g}$, which satisfy $\vec{\hat C}^t \vec X_d \vec{\tilde C} = \vec 0$
and $\vec{\tilde C}^t \vec X_d \vec{\hat C} = \vec 0$.
\end{prop}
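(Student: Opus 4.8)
\emph{Proof plan.}
The plan is to combine the classical theory of $2$-cell embeddings of graphs on orientable surfaces with a Gram--Schmidt orthogonalization in the reactance inner product. First I would invoke the standard fact from topological graph theory that every connected graph $G$ admits a cellular (every face a $2$-cell) embedding on an orientable closed surface $S_g$, for instance for $g$ equal to the orientable genus $\gamma(G)$ (see \cite{Mode16} and references therein); the planar case $g=0$ is exactly the setting of Section~\ref{sec:topology}. Fix such an embedding. Euler's formula $N-L+F=2-2g$ then pins down the number of faces: $F=L-N+2-2g$, so in particular $F-1=L-N+1-2g$, and $F\ge 1$ forces $2g\le L-N+1$. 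Each face is a $2$-cell whose boundary walk, oriented consistently with $S_g$, is a closed walk in $G$ and therefore an element of the cycle space $\mathcal{Z}:=\ker\vec K$. I would collect these $F$ face-boundary cycles and observe that they satisfy precisely one linear relation: their signed sum vanishes, because under a consistent orientation every edge occurs in the multiset of all boundary walks exactly twice, with opposite signs (whether it borders two distinct faces or the same face twice). More precisely, the face boundaries generate the kernel of the surjection $H_1(G)\to H_1(S_g)$ induced by inclusion (attaching $2$-cells only kills homology); this kernel has dimension $(L-N+1)-2g=F-1$, so exactly one of the $F$ relations among the face cycles is independent. Taking $L-N+1-2g$ linearly independent face boundaries as the columns of $\vec{\tilde C}\in\mathbb{R}^{L\times(L-N+1-2g)}$ gives the geometric part of the basis.

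For the topological part, the quotient $\mathcal{Z}/\mathrm{span}(\vec{\tilde C})\cong H_1(S_g)\cong\mathbb{Z}^{2g}$ is spanned by the $g$ pairs of handle generators $a_1,b_1,\dots,a_g,b_g$; pick closed walks $\hat\gamma_1,\dots,\hat\gamma_{2g}$ in $G$ representing them, which is possible because the $1$-skeleton of the embedding carries all of $H_1(S_g)$. Together with the columns of $\vec{\tilde C}$ these already form a cycle basis of $G$, since the face boundaries annihilate the quotient while the $\hat\gamma_j$ descend to a basis of it. It remains only to secure the orthogonality $\vec{\hat C}^t\vec X_d\vec{\tilde C}=\vec 0$. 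Here I would use that $\vec X_d=\mathrm{diag}(x_1,\dots,x_L)$ is positive definite, so $\langle v,w\rangle:=v^t\vec X_d w$ is a genuine inner product on $\mathbb{R}^L$ and hence on $\mathcal{Z}$. Writing $\mathcal{Z}_{\mathrm{geom}}=\mathrm{span}(\vec{\tilde C})$, we have $\mathcal{Z}=\mathcal{Z}_{\mathrm{geom}}\oplus\mathcal{Z}_{\mathrm{geom}}^{\perp}$ with $\dim\mathcal{Z}_{\mathrm{geom}}^{\perp}=2g$, and the orthogonal projection $\pi\colon\mathcal{Z}\to\mathcal{Z}_{\mathrm{geom}}^{\perp}$ induces an isomorphism $\mathcal{Z}/\mathcal{Z}_{\mathrm{geom}}\cong\mathcal{Z}_{\mathrm{geom}}^{\perp}$. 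I would therefore replace each $\hat\gamma_j$ by $\pi(\hat\gamma_j)$, i.e.\ subtract its $\mathcal{Z}_{\mathrm{geom}}$-component; this alters neither its class in $H_1(S_g)$ nor the spanning and independence properties above, and by construction each resulting column of $\vec{\hat C}\in\mathbb{R}^{L\times 2g}$ is $\vec X_d$-orthogonal to every column of $\vec{\tilde C}$. Transposing yields $\vec{\tilde C}^t\vec X_d\vec{\hat C}=\vec 0$, and $(\vec{\tilde C}\mid\vec{\hat C})$ is the claimed cycle basis, with $\vec A=\vec C^t\vec X_d\vec C$ now block-diagonal into a geometric and a topological block.

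The main obstacle is conceptual rather than computational: one must invoke cleanly that a cellular embedding exists and that its face-boundary cycles span exactly $\ker(H_1(G)\to H_1(S_g))$ with a single relation, which is standard but needs care (in particular that a minimum-genus embedding is cellular, so that $F$ is fixed by Euler's formula). A secondary caveat worth stating explicitly is that after the projection step the columns of $\vec{\hat C}$ are real linear combinations of graph cycles, not single simple cycles; they are ``non-contractible cycles'' only in the sense of representing nontrivial classes in $H_1(S_g)$, which is all that is used, and indeed $\vec{\hat C}$ is allowed arbitrary real entries. Once positive-definiteness of $\vec X_d$ is in hand, the orthogonalization --- the genuinely new ingredient here --- is immediate.
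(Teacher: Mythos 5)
Your proof is correct and follows essentially the route the paper intends for its supplementary proof: a cellular embedding fixes the $L-N+1-2g$ facet boundaries via Euler's formula and cellular homology, and the $2g$ handle generators are then made $\vec X_d$-orthogonal to the facet span by projecting in the (positive-definite) reactance inner product, i.e.\ by passing to discrete harmonic representatives. Your explicit caveat that the columns of $\vec{\hat C}$ end up as real linear combinations of graph cycles representing nontrivial classes of $H_1(S_g)$, rather than $\pm 1$ circuit vectors, is exactly the reading the proposition requires, since genuine non-contractible circuits cannot satisfy $\vec{\hat C}^t \vec X_d \vec{\tilde C}=\vec 0$ for generic line reactances.
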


A proof is given in the supplement~\cite{supp}. The main result of this proposition is that the cycle basis of any graph can be decomposed into two parts. The geometric cycles behave just as the facets in a planar graph. But for non-planar graphs there is a second type of cycles -- the topological ones. For the simplest non-planar examples one can find an embedding without line-crossings on the surface of a torus, which has the genus $g=1$. Two topological cycles have to be added to the cycle basis, which wind around the torus in the two distinct directions. These cycles are intrinsically non-local. The following corollary now shows that also the effects of a line outage can be decomposed.

\begin{corr}
Consider a general graph with embedding and cycle basis as in Proposition \ref{prop:topo}. Then the flow changes after the outage of a line $\ell$ are given by
\be
   \Delta \vec F = \vec{\tilde C} \Delta \vec{\tilde f} + \vec{\hat C} \Delta \vec{\hat f}
\ee
where the cycle flows are given by
\begin{align}
    (\vec{\tilde C}^t \vec X_d \vec{\tilde C})  \Delta \vec{\tilde f} &= \frac{F_{\ell}}{M_{\ell, \ell}} \vec{\tilde C}^t \vec u_{\ell} \label{eq:decomp-geom} \\
       (\vec{\hat C}^t \vec X_d \vec{\hat C})  \Delta \vec{\hat f} &= \frac{F_{\ell}}{M_{\ell, \ell}} \vec{\hat C}^t \vec u_{\ell}. \label{eq:decomp-topol}
\end{align}
\end{corr}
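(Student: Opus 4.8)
The plan is to recognize that Proposition~\ref{prop:topo} hands us, in the columns of $\vec{\tilde C}$ together with those of $\vec{\hat C}$, a bona fide cycle basis of $G$: the combined matrix $\vec C = (\vec{\tilde C}, \vec{\hat C})$ has $(L-N+1-2g)+2g = L-N+1$ columns, and by the proposition these span the cycle space $\ker \vec K$. Consequently every result of Sections~\ref{sec:graph}--\ref{sec:comp} applies verbatim with this particular choice of $\vec C$. In particular, Lemma~\ref{lemma:cf1} gives that the outage of line $\ell$ induces cycle flows $\Delta\vec f$ solving $\vec A \Delta\vec f = \vec q$ with $\vec A = \vec C^t \vec X_d \vec C$ and $\vec q = F_\ell M_{\ell,\ell}^{-1} \vec C^t \vec u_\ell$, and the physical flow change is $\Delta\vec F = \vec C \Delta\vec f$. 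Writing $\Delta\vec f$ in the block partition $(\Delta\vec{\tilde f}; \Delta\vec{\hat f})$ induced by $\vec C = (\vec{\tilde C}, \vec{\hat C})$ immediately reproduces the first displayed equation $\Delta\vec F = \vec{\tilde C}\Delta\vec{\tilde f} + \vec{\hat C}\Delta\vec{\hat f}$.

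The heart of the argument is then to exhibit the block structure of $\vec A$ and of $\vec q$. Partitioning,
\begin{equation*}
\vec A = \begin{pmatrix} \vec{\tilde C}^t \vec X_d \vec{\tilde C} & \vec{\tilde C}^t \vec X_d \vec{\hat C} \\ \vec{\hat C}^t \vec X_d \vec{\tilde C} & \vec{\hat C}^t \vec X_d \vec{\hat C} \end{pmatrix}, \qquad \vec C^t \vec u_\ell = \begin{pmatrix} \vec{\tilde C}^t \vec u_\ell \\ \vec{\hat C}^t \vec u_\ell \end{pmatrix},
\end{equation*}
and invoking the orthogonality relations $\vec{\tilde C}^t \vec X_d \vec{\hat C} = \vec 0$ and $\vec{\hat C}^t \vec X_d \vec{\tilde C} = \vec 0$ supplied by Proposition~\ref{prop:topo}, the off-diagonal blocks vanish, so $\vec A = \mathrm{diag}(\vec{\tilde C}^t \vec X_d \vec{\tilde C},\, \vec{\hat C}^t \vec X_d \vec{\hat C})$. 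A block-diagonal linear system splits, so $\vec A \Delta\vec f = \vec q$ is equivalent to the pair \eqref{eq:decomp-geom}--\eqref{eq:decomp-topol}. To close the loop one notes that $M_{\ell,\ell} = \vec u_\ell^t \vec C \vec A^{-1} \vec C^t \vec u_\ell$ is unchanged by the choice of cycle basis: under a change $\vec C \mapsto \vec C \vec T$ with $\vec T$ invertible one has $\vec A \mapsto \vec T^t \vec A \vec T$, hence $\vec M = \vec C \vec A^{-1} \vec C^t$ is invariant, so the prefactor $F_\ell/M_{\ell,\ell}$ here is the same quantity as in Proposition~\ref{prop:DeltaF}.

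There is no genuine obstacle beyond Proposition~\ref{prop:topo} itself, which is assumed; the only point requiring a word of care is the invertibility of the two diagonal blocks $\vec{\tilde C}^t \vec X_d \vec{\tilde C}$ and $\vec{\hat C}^t \vec X_d \vec{\hat C}$ individually, which is needed so that \eqref{eq:decomp-geom} and \eqref{eq:decomp-topol} determine $\Delta\vec{\tilde f}$ and $\Delta\vec{\hat f}$ uniquely. This follows at once from block-diagonality together with the invertibility of the full matrix $\vec A = \vec C^t \vec X_d \vec C$ established in Section~\ref{sec:comp}, since a block-diagonal matrix is invertible if and only if each diagonal block is. One should also emphasize that the splitting is \emph{genuine}, i.e.\ the facet-based part $\vec{\tilde C}\Delta\vec{\tilde f}$ and the topological part $\vec{\hat C}\Delta\vec{\hat f}$ do not interact; this is precisely the content of the vanishing off-diagonal blocks, and it is what lets the non-local, non-contractible contribution be treated separately from the geometric one, so that the planar analysis of Section~\ref{sec:topology} continues to govern the geometric sector of a general embedded network.
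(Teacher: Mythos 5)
Your proposal is correct and follows essentially the same route as the paper: decompose $\vec C = (\vec{\tilde C},\vec{\hat C})$ and $\Delta\vec f$ into blocks, write the cycle-flow equation of Lemma~\ref{lemma:cf1} in block form, and use the orthogonality relations $\vec{\tilde C}^t \vec X_d \vec{\hat C} = \vec 0$, $\vec{\hat C}^t \vec X_d \vec{\tilde C} = \vec 0$ to decouple the two sectors. Your additional remarks on the basis-invariance of $\vec M$ (hence of $M_{\ell,\ell}$) and on the invertibility of the diagonal blocks are sound refinements that the paper leaves implicit.
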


\begin{proof}
According to proposition \ref{prop:topo} the cycle incidence matrix is decomposed as $\vec C = \begin{pmatrix} \vec{\tilde C},\vec{\hat C} \end{pmatrix}$. Similarly, we can decompose the strength of the cycle flows after the line outage as
\be
   \Delta \vec f = \begin{pmatrix} \Delta \vec{\tilde f} \\ \Delta \vec{\hat f} \end{pmatrix}
\ee
such that the flow changes are given by $\Delta \vec F = \vec{\tilde C} \Delta \vec{\tilde f} + \vec{\hat C} \Delta \vec{\hat f}$. Then Eq.~\eqref{eq:CyclePoisson} reads
\be
   \begin{pmatrix} \vec{\tilde C}^t \\ \vec{\hat C}^t \end{pmatrix} \vec X_d \,
    \begin{pmatrix} \vec{\tilde C},\vec{\hat C} \end{pmatrix}
    \begin{pmatrix} \Delta \vec{\tilde f} \\ \Delta \vec{\hat f} \end{pmatrix}
   = \frac{F_\ell}{M_{\ell,\ell}} \begin{pmatrix} \vec{\tilde C}^t \\ \vec{\hat C}^t \end{pmatrix} \vec u_\ell.
\ee
Using that $\vec{\hat C}^t \vec X_d \vec{\tilde C} = \vec 0$
and $\vec{\tilde C}^t \vec X_d \vec{\hat C} = \vec 0$ the corollary follows.
\end{proof}

Remarkably, the corollary shows that the cycle flows around geometric and topological cycles can be decoupled. The matrix $\vec{\tilde A} =  \vec{\tilde C}^t \vec X_d \vec{\tilde C}$ has a Laplacian structure as in Eq.~\eqref{eq:Laplace-A} because at each edge of the graph at most two facets meet. Thus, Eq.~\eqref{eq:decomp-geom} is a discrete Poisson equation as for plane graphs and the propositions \ref{prop:domains} and \ref{prop:decay} also hold for for the flows $\Delta \vec{\tilde f}$ around the geometric cycles.
However, Eq.~\eqref{eq:decomp-topol} has no such interpretation and it is, in general, dense on both sides. Thus, the topological cycles represented by Eq.~\eqref{eq:decomp-topol} are responsible for complicated, non-local effects of damage spreading in general power grids.

\section{Conclusions}

Line Outage Distribution Factors are important for assessing the reliability
of a power system, in particular with the recent rise of renewables.
In this paper, we described a new dual formalism for calculating LODFs,
that is based on using power flows through the closed
cycles of the network instead of using nodal voltage angles.

The dual theory yields a compact formula for the LODFs that only
depends on real power flows in the network.
In particular, the formula lends itself to a straightforward
generalization for the case of multiple line outages.
Effectively, using cycle flows instead of voltage angles
changes the dimensionality of the matrices appearing in
the formulae from $N\times N$ to $(L-N+1)\times (L-N+1)$.
In cases where the network is very sparse (i.e., it contains
few cycles but many nodes), this can lead to a significant speedup
in LODF computation time, a critical improvement for quick
assessment of real network contingencies. In addition, the formalism generalises easily to multiple outages and arbitrary changes in series reactance, which is important for the assessment of the impact of FACTS devices.
Often, some of the quantities involved in power flow problems
are not known exactly, i.e., they are random (see, e.g.,~\cite{Muehlpfordt2016}). Thus, extending
our work to include effects of randomness will
be an important next step.

The dual theory not only yields improvements for numerical computations,
it also provides a novel viewpoint of the underlying physics
of power grids, in particular if they are (almost) planar.
Within the dual framework for planar networks, it is easy to show that single line
contingencies induce flow changes in the power grid which decay monotonically
in the same way as an electrostatic dipole field.

\appendices

\section*{Acknowledgments}
We gratefully acknowledge support from
the Helmholtz Association (joint initiative `Energy System 2050 -- a contribution of the research field energy' and
grant no.~VH-NG-1025 to D.W.) and
the German Federal Ministry of Education and
Research (BMBF grant nos.~03SF0472B, ~03SF0472C and ~03SF0472E).
The work of H. R. was supported in part by the IMPRS Physics of Biological
and Complex Systems, G\"ottingen.


\begin{thebibliography}{10}
\providecommand{\url}[1]{#1}
\csname url@samestyle\endcsname
\providecommand{\newblock}{\relax}
\providecommand{\bibinfo}[2]{#2}
\providecommand{\BIBentrySTDinterwordspacing}{\spaceskip=0pt\relax}
\providecommand{\BIBentryALTinterwordstretchfactor}{4}
\providecommand{\BIBentryALTinterwordspacing}{\spaceskip=\fontdimen2\font plus
\BIBentryALTinterwordstretchfactor\fontdimen3\font minus
  \fontdimen4\font\relax}
\providecommand{\BIBforeignlanguage}[2]{{%
\expandafter\ifx\csname l@#1\endcsname\relax
\typeout{** WARNING: IEEEtran.bst: No hyphenation pattern has been}%
\typeout{** loaded for the language `#1'. Using the pattern for}%
\typeout{** the default language instead.}%
\else
\language=\csname l@#1\endcsname
\fi
#2}}
\providecommand{\BIBdecl}{\relax}
\BIBdecl

\bibitem{Amin05}
S.~M. Amin and B.~F. Wollenberg, ``Toward a smart grid: power delivery for the
  21st century,'' \emph{IEEE Power and Energy Magazine}, vol.~3, no.~5, p.~34,
  2005.

\bibitem{Heid10}
D.~Heide, L.~von Bremen, M.~Greiner, C.~Hoffmann, M.~Speckmann, and
  S.~Bofinger, ``Seasonal optimal mix of wind and solar power in a future,
  highly renewable europe,'' \emph{Renewable Energy}, vol.~35, p. 2483, 2010.

\bibitem{12powergrid}
M.~Rohden, A.~Sorge, M.~Timme, and D.~Witthaut, ``Self-organized
  synchronization in decentralized power grids,'' \emph{Phys. Rev. Lett.}, vol.
  109, p. 064101, 2012.

\bibitem{Pesc14}
T.~Pesch, H.-J. Allelein, and J.-F. Hake, ``Impacts of the transformation of
  the german energy system on the transmission grid,'' \emph{Eur. Phys. J.
  Special Topics}, vol. 223, p. 2561, 2014.

\bibitem{Brown2016}
{Brown, T.}, {Schierhorn, P.}, {Tr\"oster, E.}, and {Ackermann, T.},
  ``Optimising the {European} transmission system for 77\% renewable
  electricity by 2030,'' \emph{IET Renewable Power Generation}, vol.~10, no.~1,
  pp. 3--9, 2016.

\bibitem{16redundancy}
D.~Witthaut, M.~Rohden, X.~Zhang, S.~Hallerberg, and M.~Timme, ``Critical links
  and nonlocal rerouting in complex supply networks,'' \emph{Phys. Rev. Lett.},
  vol. 116, p. 138701, 2016.

\bibitem{Wood14}
A.~J. Wood, B.~F. Wollenberg, and G.~B. Shebl\'e, \emph{Power Generation,
  Operation and Control}.\hskip 1em plus 0.5em minus 0.4em\relax New York: John
  Wiley \& Sons, 2014.

\bibitem{Gule07}
T.~G\"uler, G.~Gross, and M.~Liu, ``Generalized line outage distribution
  factors,'' \emph{IEEE Transactions on Power Systems}, vol.~22, no.~2, pp.
  879--881, 2007.

\bibitem{Guo09}
J.~Guo, Y.~Fu, Z.~Li, and M.~Shahidehpour, ``Direct calculation of line outage
  distribution factors,'' \emph{IEEE Transactions on Power Systems}, vol.~24,
  no.~3, pp. 1633--1634, 2009.

\bibitem{ronellen16}
H.~Ronellenfitsch, M.~Timme, and D.~Witthaut, ``A dual method for computing
  power transfer distribution factors,'' \emph{IEEE Transactions on Power
  Systems}, vol.~PP, no.~99, pp. 1--1, 2016.

\bibitem{Dies10}
R.~Diestel, \emph{Graph Theory}.\hskip 1em plus 0.5em minus 0.4em\relax New
  York: Springer, 2010.

\bibitem{Stot09}
B.~Stott, J.~Jardim, and O.~Alsac, ``Dc power flow revisited,'' \emph{IEEE
  Trans. Power Syst.}, vol.~24, no.~3, p. 1290, 2009.

\bibitem{Purc05}
K.~Purchala, L.~Meeus, D.~V. Dommelen, and R.~Belmans, ``Usefulness of dc power
  flow for active power flow analysis,'' in \emph{IEEE Power Engineering
  Society General Meeting}, June 2005, pp. 454--459 Vol. 1.

\bibitem{Hert06}
D.~Van~Hertem, J.~Verboomen, K.~Purchala, R.~Belmans, and W.~Kling,
  ``Usefulness of dc power flow for active power flow analysis with flow
  controlling devices,'' in \emph{The 8th IEEE International Conference on AC
  and DC Power Transmission}, 2006, pp. 58--62.

\bibitem{Newm10}
M.~E.~J. Newman, \emph{Networks -- An Introduction}.\hskip 1em plus 0.5em minus
  0.4em\relax Oxford: Oxford University Press, 2010.

\bibitem{Grai94}
J.~J. Grainger and W.~D. Stevenson~Jr., \emph{Power System Analysis}.\hskip 1em
  plus 0.5em minus 0.4em\relax New York: McGraw-Hill, 1994.

\bibitem{Mode16}
\BIBentryALTinterwordspacing
C.~D. Modes, M.~O. Magnasco, and E.~Katifori, ``Extracting hidden hierarchies
  in 3d distribution networks,'' \emph{Phys. Rev. X}, vol.~6, p. 031009, Jul
  2016. [Online]. Available:
  \url{http://link.aps.org/doi/10.1103/PhysRevX.6.031009}
\BIBentrySTDinterwordspacing

\bibitem{Wood50}
M.~A. Woodbury, ``Inverting modified matrices,'' \emph{Memorandum report},
  vol.~42, p. 106, 1950.

\bibitem{Zhang2006}
X.-P. Zhang, C.~Rehtanz, and B.~Pal, \emph{Flexible AC Transmission Systems:
  Modelling and Control}.\hskip 1em plus 0.5em minus 0.4em\relax Berlin:
  Springer, 206.

\bibitem{SmartWires}
``{Smart Wires},'' \url{http://www.smartwires.com/}, accessed: 2016-11-30.

\bibitem{MATPOWER}
R.~D. Zimmerman, C.~E. Murillo-Sanchez, and R.~J. Thomas, ``Matpower:
  Steady-state operations, planning and analysis tools for power systems
  research and education,'' \emph{IEEE Trans. Power Syst.}, vol.~26, p.~12,
  2011.

\bibitem{Cole16}
T.~Coletta, R.~Delabays, I.~Adagideli, and P.~Jacquod, ``Vortex flows in high
  voltage ac power grids,'' \emph{preprint arxiv:1605.07925}, 2016.

\bibitem{Mani16}
D.~Manik, M.~Timme, and D.~Witthaut, ``Cycle flows and multistabilty in
  oscillatory networks: an overview,'' \emph{preprint arxiv:1611.09825}, 2016.

\bibitem{supp}
Supplementary Information, available online at IEEE.

\bibitem{Flis13}
S.~Fliscounakis, P.~Panciatici, F.~Capitanescu, and L.~Wehenkel, ``Contingency
  ranking with respect to overloads in very large power systems taking into
  account uncertainty, preventive and corrective actions,'' \emph{IEEE Trans.
  Power Syst.}, vol.~28, p. 4909, 2013.

\bibitem{GBnet}
W.~A. Bukhsh and K.~McKinnon, ``Network data of real transmission networks,''
  \url{http://www.maths.ed.ac.uk/optenergy/NetworkData/}, 2013, [Online;
  accessed 03-August-2015].

\bibitem{Grady2010}
L.~J. Grady and J.~R. Polimeni, \emph{{Discrete Calculus - Applied Analysis on
  Graphs for Computational Science}}.\hskip 1em plus 0.5em minus 0.4em\relax
  Springer, 2010.

\bibitem{Labavic2014}
\BIBentryALTinterwordspacing
D.~Labavi{\'{c}}, R.~Suciu, H.~Meyer-Ortmanns, and S.~Kettemann, ``{Long-range
  response to transmission line disturbances in DC electricity grids},''
  \emph{The European Physical Journal Special Topics}, vol. 223, no.~12, pp.
  2517--2525, oct 2014. [Online]. Available:
  \url{http://link.springer.com/10.1140/epjst/e2014-02273-0}
\BIBentrySTDinterwordspacing

\bibitem{Kett15}
S.~Kettemann, ``Delocalization of phase perturbations and the stability of ac
  electricity grids,'' \emph{preprint arXiv:1504.05525}, 2015.

\bibitem{Jung2015}
\BIBentryALTinterwordspacing
D.~Jung and S.~Kettemann, ``{Long-range Response in AC Electricity Grids},''
  \emph{arXiv preprint}, no.~3, pp. 1--8, 2015. [Online]. Available:
  \url{http://arxiv.org/abs/1512.05391}
\BIBentrySTDinterwordspacing

\bibitem{Hine16}
P.~Hines, I.~Dobson, and P.~Rezaei, ``Cascading power outages propagate locally
  in an influence graph that is not the actual grid topology,'' \emph{IEEE
  Transactions on Power Systems}, vol.~PP, no.~99, pp. 1--1, 2016.

\bibitem{Muehlpfordt2016}
T.~M\"uhlpfordt, T.~Faulwasser, and V.~Hagenmeyer, ``Solving stochastic ac
  power flow via polynomial chaos expansion,'' in \emph{2016 IEEE Conference on
  Control Applications (CCA)}, Sept 2016, pp. 70--76.

\end{thebibliography}
\end{document}